\newtheorem{theorem}{Theorem}[]
\newtheorem{corollary}{Corollary}[]
\newtheorem{lemma}{Lemma}[]
\newtheorem{proposition}{Proposition}[]
\newtheorem{remark}{Remark} []
\newtheorem{definition}{Definition}[]
\newtheorem {example}{Example}[]
\newcommand\imm{\ensuremath}
\newcommand\limp{\to}
\newcommand\impl{\limp}
\newcommand\qe[1]{\exists #1}
\newcommand\Proves\Rightarrow
\newcommand\nProves\nRightarrow
\newcommand\I{\imm{\mathcal I}}
\newcommand\nmodels{\mathrel{\nvDash}}
\newcommand\entails{\mathrel{\models}}
\newcommand\nentails{\mathrel{\nmodels}}
\newcommand{\shorttitle}{\@title}
\renewenvironment{abstract}
{
  \centerline
  {\large \bfseries \scshape Abstract}
  \begin{quote}
}
{
  \end{quote}
}
\def\keywordname{{\bfseries \emph{Keywords}}}%
\def\keywords#1{\par\addvspace\medskipamount{\rightskip=0pt plus1cm
\def\and{\ifhmode\unskip\nobreak\fi\ $\cdot$
}\noindent\keywordname\enspace\ignorespaces#1\par}}
\title{Skolemization and Decidability of the Bernays–Sch\"onfinkel Class in G\"odel Logics
\thanks{This research was funded in part by the Austrian Science Fund (FWF) 10.55776/P36571 and 10.55776/RIC1287724.}
}
\author{
Matthias Baaz, Mariami Gamsakhurdia, Anela Lolic\\
Technische Universität Wien\\
A–1040 Vienna, Austria\\
\textit{Baaz@logic.at}\\
  \textit{mariami@logic.at}\\
  \textit{Anela@logic.at}
 }
\date{}
\begin{document}

\maketitle
\begin{abstract}
In 1928, P. Bernays and M. Sch\"onfinkel proved the decidability for the class of function-free sentences with prefixes $\exists\bar{x}\forall\bar{y}A(\bar{x},\bar{y})$ (satisfiability) and $\forall \bar{x} \exists \bar{y}A(\bar{x},\bar{y})$ (validity) (specifically, the set of sentences that, when written in prenex normal form, have a prefix containing quantifiers and the matrix without
function symbols).
We will study the decidability of the Bernays–Sch\"onfinkel class for all G\"odel logics. Our argument for validity is based on the fact that Skolemization is possible for prenex G\"odel logics and our argument for satisfiability is based on the general properties of prenex formulas In G\"odel logics. (We call satisfiability 1-satisfiability, as validity and satisfiability are not dual as in classic logic and there are more concepts of satisfiability in G\"odel logics. )
Our central result shows that the validity and 1-satisfiability problems for the BS class are decidable across all Gödel logics. 
As a corollary, we obtain that valid formulas in the Bs class are valid in all infinite goedel logics and 1 satisfiable formulas are 1 satisfiable in all goedel logics.

\end{abstract}

\keywords{Skolemization\and Gödel Logics \and Prenex Normal Forms \and Decidability \and Bernays–Schoenfinkel class \and Validity }

\section{Introduction}
It is widely acknowledged that any first-order formula in classical logic is logically identical to one in prenex form. 
In general, any set of quantifier prefixes defines a fragment of first-order logic, specifically the set of prenex formulas that contain one of the quantifier prefixes in question.
In the early stages of research, it was recognised that while some fragments defined in this way have decidable satisfiability/validity, others do not.

In 1928, P. Bernays and M. Sch\"onfinkel proved the decidability for the class of function-free sentences with prefixes $\exists\bar{x}\forall\bar{y}A(\bar{x},\bar{y})$ (satisfiability) and $\forall \bar{x} \exists \bar{y}A(\bar{x},\bar{y})$ (validity) (specifically, the set of sentences that, when written in prenex normal form, have a prefix containing quantifiers and the matrix without function symbols denoted as $BS$ class) \cite{DrebenGoldfarb}.

The $BS$ class is known to be decidable in classical logic, and also in intuitionistic logic. In classical logic, the existence of Skolem functions ensures that for a formula of the form $\forall x \exists y A(x, y)$, a witnessing term $t(x)$ can be constructed. Validity reduces to checking ground instances, and this process is supported by Herbrand Theorem and the completeness of the classical system. In intuitionistic logic, decidability of the $BS$ class follows from the constructive nature of intuitionistic reasoning: existential quantifiers are interpreted via realizability, and witnessing terms must be explicitly constructible. However, this stability does not extend to general intermediate logics. Our goal is to study decidability classes in all Gödel logics.

Gödel logics form an important class of intermediate logics, situated between classical and intuitionistic logic, parameterized by closed subsets \( V \subseteq [0,1] \) containing 0 and 1. Unlike classical logic, validity and satisfiability are not dual in Gödel logics. 
The lack of duality between validity and satisfiability means that classical techniques cannot be directly applied. Therefore validity 
and satisfiability should be treated separately.
One crucial distinction is the lack of proof-theoretic support for many Gödel logics. Unlike classical and some intuitionistic systems, Gödel logics often lack recursive enumerability (r.e.) or might not admit analytic calculi (e.g., cut-free hypersequent systems). Finite-valued Gödel logics are known to support such calculi.

However, not all prenex fragments of Gödel logics are r.e.. 
A prominent example is the logic $G_\uparrow$, which is known to be not r.e.. Importantly, its prenex fragment is logically equivalent to the full logic, and therefore also not r.e..

Thus, while $BS$ formulas are decidable in all Gödel logics, this result is not a consequence of syntactic simplicity or proof-theoretic regularity.
The decidability of the $BS$ class here crucially relies on a semantic \emph{Skolemization}, that replaces universal quantifiers with constants w.r.t validity and existential quantifiers w.r.t. 1-satisfiability. This reduction yields purely existential, constant-only formulas whose validity can be decided propositionally via Herbrand-style disjunctions. The Skolemization of prenex fragments for 1-satisfiability is immediate as it coincides with classical Skolemization.




Our argument for satisfiability is based on the \emph{gluing argument}, a crucial tool for analyzing the decidability of the satisfiability of the prenex subclasses in Gödel logic and the general properties of prenex formulas in G\"odel logics.
 Specifically, we consider a redefinition of satisfiability: if a set of formulas is satisfiable in a Gödel logic with an isolated truth value from above, one can modify the interpretation such that all formulas evaluating above this isolated value are assigned the truth value $1$, while preserving the values of the remaining formulas.

\section{Preliminaries}

\begin{definition}\label{goedel}(G\"odel logics).
First-order G\"odel logics are a family of many-valued logics where the truth values set (known also as \emph{G\"odel set}) \(V\) is closed subset of the full \([0,1]\) interval that includes both $0$ and $1$ given by the following evaluation function \(\mathcal{I}\) on $V$
\begin{align*}
   (1) & \quad \mathcal{I}(\bot)= 0\\
   (2) & \quad \mathcal{I}(A \wedge B)= min \{\mathcal{I}(A), \mathcal{I}(B)\}\\
   (3) & \quad \mathcal{I}(A \vee B)= max \{\mathcal{I}(A), \mathcal{I}(B)\} \\
   (4) & \quad  \mathcal{I}(A \supset B) = \begin{cases}\I(B) & \text{if\/ $\I(A) > \I(B)$,} \\
               1     & \text{if\/ $\I(A) \le \I(B)$.}\end{cases}   \\       
 (5) & \quad \mathcal{I}(\forall x A(x))= \text{inf}\{\mathcal{I}( A(u))\: u\in U_\mathcal{I}\} \\
 (6) & \quad\mathcal{I}(\exists x A(x))= \text{sup} \{\mathcal{I}( A(u))\: u\in U_\mathcal{I}\} 
\end{align*}
\end{definition}
\begin{definition} For a truth value set $V$, a (possibly infinite) set $\Gamma$ of formulas ($1$-)entails a formula $A$ if the interpretation  $\I$ on $V$ of $A$ is $1$ in case the interpretations of all formulas in $\Gamma$ are $1$, i.e.,
$\Gamma \Vdash_V A
\Longleftrightarrow  (\forall \I,  \forall B \in \Gamma:\I(B) = 1) \rightarrow \I(A) = 1$.
\end{definition}
As a generalization of classical satisfiability, we introduce the
following concepts:
\begin{definition}[Validity]\label{d - 1.4}
The formula in Gödel logic is \emph{valid} 
if the formula evaluates to 1 
under every interpretation. 
\end{definition}
\begin{definition}[satisfiability]\label{d - 1.3} The formula in Gödel logic is \emph{1-satisfiable 
} if there exists at least one interpretation
that assigns 1 
to the formula.  
\end{definition}

\begin{remark}
    Validity and unsatisfiability are not dual in Gödel logic, e.g. $A\vee \neg A$ is not valid but its negation is unsatisfiable.
\end{remark}

Properties of first-order Gödel logics are based on the order-theoretic structure of their set of truth values rather than the truth values themselves. 
In particular, different logics are distinguished by the number and type of accumulation points and their respective Cantor–Bendixon ranks \cite{Baaz:Delta},\cite{Beckmann-Goldstern-Preining:Fraisse},\cite{Kechris:DescriptiveSetTheory}. This raises the question of an intentional\footnote{ We use intentional in the sense that logical properties (e.g., set of validities, entailment relation, set of satisfiable formulas) change when the characterising object changes.} characterization of Gödel logics. While it is technically convenient to represent Gödel logics via their set of truth values, it has the disadvantage that uncountably many truth value sets have indistinguishable logical properties. 
For example, consider 
\begin{center}
 \(V_1 = \{1/n : n\in N^+\} \cup \{0\}\) \\
\(V_2 = \{1/2n : n\in N^+\}\cup\{0\}\).   
\end{center} The logics \(G_{V_1}\) and \(G_{V_2}\) have exactly the same properties.
This leads to the representation of Gödel logics as a set of valid sentences.
\begin{definition} \label{d - 1.2} The Gödel logic \(G_V\)  is defined as the set of formulas \(A\) in the language \(\mathcal{L}_G\) such that the evaluation of \(A\) is 1 for all 
\(V\)-interpretations \(\mathcal{I}\).
 \end{definition}


We first consider several ‘prototypical’\footnote{This is because logic is defined extensionally as the set of formulas valid in this truth value set, so the Gödel logics on different truth value sets may coincide.} Gödel sets:\\
\begin{center}
\(V_{[0,1]}=[0,1]\)\\
\(V_\downarrow=\{0\}\cup \{1/k : k\geq 1\}\)\\
\(V_\uparrow=\{1\}\cup \{1 - 1/k : k\geq 1\}\)\\
 \(V_m=\{1\}\cup \{1 - 1/k : 1\geq k\geq m-1\}\)  
\end{center}
The corresponding Gödel logics are \(G_{[0,1]}\), \(G_\downarrow\), \(G_\uparrow \), \(G_m\). \(G_{[0,1]}\) is \emph{standard} Gödel logic.
The logic \(G_\downarrow\) also turns out to be closely related to some temporal logics. 

The relationships between finite and infinite valued propositional Gödel logics are well understood. Any choice of an infinite set of truth values results in the same propositional Gödel logic. In the first-order case, the situation is more interesting.
We establish several important results regarding the relationships between various first-order
Gödel logics.

\begin{proposition}\label{p - 1.1}
If $V\subseteq V'$ then $G_{V'} \subseteq G_V$.
\end{proposition}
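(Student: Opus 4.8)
The plan is to unwind the definition of $G_V$ and $G_{V'}$ (Definition~\ref{d - 1.2}) and show that the inclusion of truth value sets induces the \emph{reverse} inclusion of the classes of admissible interpretations. Concretely, I would argue that every $V$-interpretation can be regarded as a $V'$-interpretation assigning exactly the same values, so that the collection of $V$-interpretations is a subcollection of the $V'$-interpretations. Validity over the larger collection then immediately entails validity over the smaller one, which is precisely the claimed inclusion $G_{V'} \subseteq G_V$.

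First I would fix an arbitrary formula $A \in G_{V'}$ together with an arbitrary $V$-interpretation $\mathcal{I}$, the goal being $\mathcal{I}(A) = 1$. The core step is to check that $\mathcal{I}$, whose atomic values all lie in $V \subseteq V'$, is itself a legitimate $V'$-interpretation computing the same values. This amounts to verifying that the evaluation clauses (1)--(6) of Definition~\ref{goedel} are \emph{absolute}: the value they produce depends only on the values of the immediate subformulas, not on the ambient truth value set. For the propositional connectives this is clear, since $\min$, $\max$, and the Gödel implication are computed pointwise from the subformula values. For the quantifier clauses (5) and (6) I would note that the infimum and supremum are taken in $[0,1]$ and yield the same real number irrespective of whether the codomain is read as $V$ or as $V'$.

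The one point requiring care --- and the step I expect to be the main, though mild, obstacle --- is ensuring that the $\inf$ and $\sup$ in clauses (5) and (6) actually land inside the truth value set, so that $\mathcal{I}$ is well-defined as an interpretation at all. This is exactly where the standing hypothesis that $V$ (and $V'$) is a \emph{closed} subset of $[0,1]$ enters: a closed set contains the infima and suprema of its subsets, so $\mathcal{I}(\forall x\, A(x))$ and $\mathcal{I}(\exists x\, A(x))$ remain in $V$, and hence in $V'$. Once this is in place, the inductive argument on the structure of a formula shows that evaluation under $\mathcal{I}$ gives identical results whether $\mathcal{I}$ is viewed over $V$ or over $V'$, so $\mathcal{I}$ is a bona fide $V'$-interpretation.

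Finally, since $A \in G_{V'}$ means $\mathcal{I}'(A) = 1$ for \emph{every} $V'$-interpretation $\mathcal{I}'$, and $\mathcal{I}$ is one such, we conclude $\mathcal{I}(A) = 1$. As $\mathcal{I}$ was an arbitrary $V$-interpretation, this yields $A \in G_V$, establishing $G_{V'} \subseteq G_V$.
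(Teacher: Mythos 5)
Your proof is correct and takes essentially the same route as the paper's: you regard each $V$-interpretation as a $V'$-interpretation, observe that the evaluation clauses (1)--(6) are absolute with respect to the ambient truth value set, and then transfer validity from the larger class of interpretations to the smaller one. Your explicit appeal to the closedness of $V$ to ensure that the infima and suprema in clauses (5) and (6) land inside the truth value set is a detail the paper leaves implicit, and it tightens the argument rather than changing it.
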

\begin{proof}
The evaluation of a formula $A$  under an interpretation $\mathcal{I}$ depends only on the relative ordering of the truth of atomic formulas, rather than on the specific choice of the set $V'$ or the numerical values assigned to atomic formulas.

If $V \subseteq V'$ are both Gödel sets, and $\mathcal{I}$ is an interpretation into $V'$, then $\mathcal{I}$ can also be seen as an interpretation into
$V$, and the values $\mathcal{I}(A)$, computed recursively using (1) – (6) in Definition 1, do not depend on whether we consider $\mathcal{I}$ as a $V'$-interpretation or a $V$-interpretation. Consequently, if $V' \subseteq V$, there are more interpretations into $V$ than into $V'$, hence if $\Gamma\models_{V'} A$ then also $\Gamma\models_V A$ and $G_{V'}\subseteq G_V $.
\end{proof}

\begin{proposition}\label{p - 1.2}\cite{Baaz-Leitsch-Zach:FirstOrderGoedel}
    The following containment relationships hold:\\
1) \(G_m\supsetneq G_{m+1}\)\\  
2) \(G_m\supsetneq G_\uparrow \supsetneq G_{[0,1]}\)   \\
3) \(G_m\supsetneq G_\downarrow \supsetneq G_{[0,1]}\)\\
4) \(G_{[0,1]}=\bigcap_V G_V\) 
 where \(V\) ranges over all Gödel sets. \(G_{[0,1]}\) logic is contained in every Gödel logic.
\end{proposition}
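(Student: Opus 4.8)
The plan is to obtain every non-strict inclusion directly from Proposition~\ref{p - 1.1}, and to witness every \emph{strict} inclusion by a single separating formula. Throughout I would use the observation made in the proof of Proposition~\ref{p - 1.1} that $\mathcal{I}(A)$ depends only on the relative order of the atomic truth values; consequently $G_V$ depends only on the order type of $V$, so any two $m$-element Gödel sets (both finite linear orders with least element $0$ and greatest element $1$) determine the same logic $G_m$.

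For the inclusions, note $V_m \subseteq V_{m+1}$ and $V_m \subseteq V_\uparrow$ (all three sets list values of the form $1-1/k$), while $V_\uparrow, V_\downarrow \subseteq [0,1]$ and every Gödel set $V$ satisfies $V \subseteq [0,1]$. Proposition~\ref{p - 1.1} then yields $G_{m+1}\subseteq G_m$, $G_\uparrow \subseteq G_m$, $G_{[0,1]}\subseteq G_\uparrow$, $G_{[0,1]}\subseteq G_\downarrow$, and $G_{[0,1]}\subseteq G_V$ for all $V$. The only inclusion not of this shape is $G_\downarrow \subseteq G_m$, since $V_m \not\subseteq V_\downarrow$; here I would instead pick the $m$-element subset $W=\{0\}\cup\{1/k : 1\le k\le m-1\}\subseteq V_\downarrow$, observe $G_W=G_m$ by order-invariance, and apply Proposition~\ref{p - 1.1} to $W\subseteq V_\downarrow$. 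Part~(4) is then immediate: $[0,1]$ is itself one of the sets $V$, so $\bigcap_V G_V \subseteq G_{[0,1]}$, while $G_{[0,1]}\subseteq G_V$ for every $V$ gives the reverse inclusion, whence equality.

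To separate the finite logic from the infinite ones I would use the pigeonhole formula $P_m := \bigvee_{0\le i<j\le m}(p_i \lequiv p_j)$ on $m+1$ propositional variables, where $p_i\lequiv p_j$ abbreviates $(p_i\supset p_j)\wedge(p_j\supset p_i)$, so that $\mathcal{I}(p_i\lequiv p_j)=1$ exactly when $\mathcal{I}(p_i)=\mathcal{I}(p_j)$. In a Gödel set with at most $m$ values the $m+1$ variables must repeat a value, forcing some disjunct to equal $1$, so $P_m$ is valid; in any set with more than $m$ values one assigns $m+1$ distinct values, at most one of them equal to $1$, making every disjunct strictly below $1$, so $P_m$ fails. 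Hence $P_m\in G_m$ but $P_m\notin G_{m+1},\,G_\uparrow,\,G_\downarrow$, which upgrades the inclusion in~(1) and the left-hand inclusions of~(2) and~(3) to strict ones.

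The remaining separations $G_\uparrow \supsetneq G_{[0,1]}$ and $G_\downarrow \supsetneq G_{[0,1]}$ are the heart of the matter and rest on an order-theoretic asymmetry: $V_\uparrow$ has order type $\omega+1$ and is therefore well-ordered, so in every interpretation the infimum $\inf_u\mathcal{I}(A(u))$ is attained, whereas $V_\downarrow$ is reverse-well-ordered, so every supremum is attained; the interval $[0,1]$ has neither property. I would exploit this with $I := \exists x\,(A(x)\supset \forall y\,A(y))$ and $S := \exists x\,(\exists y\,A(y)\supset A(x))$. A short computation gives $\mathcal{I}(I)=1$ precisely when $\inf_u\mathcal{I}(A(u))$ is attained (and $\mathcal{I}(I)$ equals that infimum otherwise), and likewise $\mathcal{I}(S)=1$ whenever $\sup_u\mathcal{I}(A(u))$ is attained. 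Thus $I$ is valid in $G_\uparrow$ but fails in $G_{[0,1]}$ under $\mathcal{I}(A(u))=1/u$ (infimum $0$, unattained), and $S$ is valid in $G_\downarrow$ but fails in $G_{[0,1]}$ under values increasing to $1/2$ from below (supremum $1/2<1$, unattained). Verifying these two evaluations, together with the attainment facts for $V_\uparrow$ and $V_\downarrow$, is the main obstacle; everything else is bookkeeping with Proposition~\ref{p - 1.1}.
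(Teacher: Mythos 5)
Your proposal is correct and takes essentially the same route as the paper: non-strict inclusions from Proposition~\ref{p - 1.1}, strictness of the finite separations via a finite-valued separating formula (your pigeonhole disjunction $P_m$ plays exactly the role of the paper's chain formula $(\top\supset A_1)\vee(A_1\supset A_2)\vee\dots\vee(A_{m-1}\supset\bot)$), the very same witnesses $C_\uparrow=\exists x\,(A(x)\supset\forall y\,A(y))$ and $C_\downarrow=\exists x\,(\exists y\,A(y)\supset A(x))$ for $G_\uparrow\supsetneq G_{[0,1]}$ and $G_\downarrow\supsetneq G_{[0,1]}$ (correctly including the subtlety that the $[0,1]$-countermodel for $C_\downarrow$ must have its unattained supremum strictly below $1$), and the same argument for part~(4). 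You additionally make explicit two details the paper glosses over, namely that $G_\downarrow\subseteq G_m$ requires the order-invariance of evaluation since $V_m\not\subseteq V_\downarrow$, and the well-order/reverse-well-order facts underlying validity of $C_\uparrow$ in $G_\uparrow$ and of $C_\downarrow$ in $G_\downarrow$.
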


\begin{proof}
The only challenging part is demonstrating strict containment.  Note that the $(Fin)$ $(\top\supset A_1)\vee(A_1\supset A_2)\vee (A_2\supset A_3)\vee \dots \vee (A_{m-2}\supset A_{m-1})\vee (A_{m-1}\supset \bot)$ is valid in 
\(G_m\) but not in \(G_{m+1}\). Consider the formulas 
\begin{center}
    (1) $\quad $ \(C_\uparrow= \exists x (A(x)\supset \forall y A(y))\)\\
   (2) $\quad $ \(C_\downarrow= \exists x (\exists y A(y)\supset  A(x))\).
\end{center}
 \(C_\downarrow\)
 is valid in all \(G_m\), 
\(G_\uparrow\). 
On the other hand,
\(C_\uparrow\) is valid in all \(G_m\) and 
\(G_\uparrow\) but not in \(G_\downarrow\). However, none of them is valid in \(G_{[0,1]}\) [\cite{Baaz-Leitsch-Zach:FirstOrderGoedel}, Corollary
2.9]. \\
If $\Gamma \entails_V A$ for every G\"{o}del set $V$\!, then it does so in
  particular for $V = [0, 1]$.  Conversely, if $\Gamma \nentails_V A$ for a
  G\"{o}del set~$V$\!, there is a $V$\!-interpretation $\I$ with $\I(\Gamma) >
  \I(A)$. Since $\I$ is also a $[0,1]$-interpretation, $\Gamma \nentails_{[0,1]}
  A$. 
\end{proof}
\begin{remark}\label{r - 1.2}
 \(C_\uparrow\) expresses the fact that every infimum 
   is a minimum, while \(C_\downarrow\) states that every supremum 
   is a maximum.
\end{remark}

\subsection{Recursive Enumerability and Limitations}

\begin{definition}[Limit point, perfect space, perfect set]\label{d - 1.6}
A \emph{non-isolated point} of a topological space is a point \(x\) such that for every open neighbourhood \(U\) of \(x\) there exists a point \(y\in U\) with \(y\neq x\). A \emph{limit point} of a topological space is a point \(x\) that is not isolated.  A space is \emph{perfect} if all its points are limit points.
A set \(P\subseteq R\) is \emph{perfect} if it is closed and together with the topology induced from \(R\) is a perfect space. 
\end{definition}
Just as Trakhtenbrot’s Theorem \cite{trakh} shows that first-order logic over finite models is not
r.e., a similar result can be obtained in Gödel logics: where $0$ is not isolated or not in the prefect set the logic is not r.e.. The topological structure of the truth set determines the logic axiomatizability and effective decidability.

 The validity in first-order G\"odel logic is characterized by the following theorem:
\begin{theorem}\cite{Baaz-Preining:Gödel–Dummett} 
A first-order G\"odel logic \(G_V\) is r.e. iff one of the following conditions is satisfied:\\
1. \(V\) is finite,\\
2. \(V\) is uncountable 
and $0$ is an isolated point,\\
3. \(V\) is uncountable, and every neighbourhood of $0$ is in the prefect subset. 
\end{theorem}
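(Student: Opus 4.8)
The plan is to prove both implications, relying throughout on the observation already exploited in Propositions~\ref{p - 1.1} and~\ref{p - 1.2} that $\I(A)$ depends only on the relative order of the atomic subvalues, so that $G_V$ is determined by the order-topological structure of $V$ rather than by the concrete reals appearing in $V$. This reduces the theorem to showing that each structural type listed on the right either admits a sound and complete recursive calculus (the ``if'' direction) or permits encoding a non-r.e.\ set into non-validity (the ``only if'' direction). In what follows I write $P$ for the perfect kernel of $V$ (the maximal perfect subset, in the sense of Definition~\ref{d - 1.6}), and read condition~3 as: some neighbourhood $[0,\epsilon]\cap V$ is contained in $P$.

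For sufficiency I would treat the three cases in turn. If $V$ is finite, then $G_V=G_m$ for some $m$, and finite-valued first-order Gödel logic is axiomatizable by the intuitionistic base extended with linearity and the finite axiom $(Fin)$ used in the proof of Proposition~\ref{p - 1.2}; the resulting Hilbert system is sound and complete, so its theorems form an r.e.\ set. If $V$ is uncountable and a whole neighbourhood $[0,\epsilon]\cap V$ lies in $P$, I would show $G_V=G_{[0,1]}$. By Proposition~\ref{p - 1.1} one inclusion is free, and for the converse one uses that a $[0,1]$-counterexample to a formula $A$ involves only countably many relevant atomic values; since $[0,\epsilon]\cap V$ is perfect, this countable configuration (including the infima and suprema produced by the quantifiers) can be order-embedded into the perfect neighbourhood of $0$ preserving all suprema and infima, turning the $[0,1]$-counterexample into a $V$-counterexample. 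As standard first-order Gödel logic $G_{[0,1]}$ is known to be axiomatizable, $G_V$ is then r.e. The case $V$ uncountable with $0$ isolated is analogous: I would reduce $G_V$ to the prototypical axiomatizable logic with an isolated bottom value together with an uncountable perfect top part, combining the essentially classical treatment of the isolated value $0$ with completeness on the perfect part.

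For necessity I would show that failure of all three conditions forces $G_V$ to be non-r.e. The excluded configurations are exactly: $V$ countably infinite (as for $G_\downarrow$ and $G_\uparrow$), and $V$ uncountable with $0$ non-isolated but with isolated points accumulating at $0$, so that no neighbourhood of $0$ is perfect. In each case $0$ is approached by an infinite, order-theoretically individuated sequence of truth values, and the formulas $C_\uparrow$ and $C_\downarrow$ from Proposition~\ref{p - 1.2}, which detect whether infima are attained as minima and suprema as maxima, serve as gadgets to quantify over and distinguish the levels of this sequence. In the spirit of the Trakhtenbrot-style phenomenon flagged before the theorem, I would use these gadgets to build a recursive reduction of a genuinely non-r.e.\ problem---the complement of the halting problem, or a $\Pi^1_1$-complete set such as well-foundedness of recursive trees---to non-validity in $G_V$, coding an unbounded computation or an infinite branch into the accumulation structure at $0$. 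This makes the set of non-valid formulas hard enough that the valid formulas cannot be r.e.

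The main obstacle is twofold. On the sufficiency side it is the reduction $G_V=G_{[0,1]}$, which is not a purely local statement: one must argue that the structure of $V$ away from $0$ can neither create nor destroy validities beyond what the perfect neighbourhood already supplies, and one must invoke the nontrivial completeness theorem for standard first-order Gödel logic. On the necessity side it is the coding itself---faithfully representing arithmetic or well-foundedness in the truth-value accumulation at $0$ using only the order-detecting power of formulas like $C_\uparrow$ and $C_\downarrow$---which is the technical heart of the argument and the source of the non-arithmetical complexity of these logics.
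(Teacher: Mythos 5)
First, note that the paper does not prove this theorem at all: it is quoted with a citation (Baaz--Preining, and in full in Baaz--Preining--Zach), so there is no in-paper proof to compare against, and your sketch has to be judged against the known argument. Your sufficiency plan is broadly aligned with it (finite case via an axiomatization with $(Fin)$; embedding a countable countermodel configuration, with its infima and suprema, into a perfect set accumulating at $0$ to get $G_V = G_{[0,1]}$; an analogous reduction when $0$ is isolated). But there is a concrete error in how you read condition~3, and it breaks your necessity direction. The correct condition is that $0$ lies in the perfect kernel $P$ of $V$, equivalently that \emph{every} neighbourhood of $0$ in $V$ is uncountable --- not, as you stipulate, that some neighbourhood $[0,\epsilon]\cap V$ is \emph{contained} in $P$. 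These are genuinely different: take $V = C \cup \{1/n : n\geq 1\} \cup \{1\}$ with $C$ a Cantor set containing $0$. Here isolated points accumulate at $0$, so no neighbourhood of $0$ lies inside $P$, and your case analysis classifies $V$ as ``excluded,'' committing you to proving $G_V$ non-r.e. Yet the perfect set $C$ accumulates at $0$, so the very embedding argument you use for sufficiency pushes every countable $[0,1]$-countermodel into $C$, giving $G_V = G_{[0,1]}$, which is r.e. The true dividing line --- stated by the paper itself immediately after the theorem --- is: non-r.e.\ exactly when $V$ is countably infinite, or $V$ is uncountable with a \emph{countable} neighbourhood of $0$ and $0$ not isolated. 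Your sufficiency argument survives under the corrected hypothesis, since $P\cap[0,\epsilon]$ still contains a perfect set accumulating at $0$, but your necessity argument, as delimited, targets logics that are in fact axiomatizable and so cannot succeed.

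Second, even setting the case split aside, the necessity half is a placeholder rather than a proof. Invoking Trakhtenbrot, the halting problem, and the gadgets $C_\uparrow$, $C_\downarrow$ identifies the right toolbox, but the entire content of the known proof is the explicit recursive translation: one reduces classical validity over all finite models (not r.e.\ by Trakhtenbrot's theorem) to validity in $G_V$, exploiting a strictly descending chain of truth values approaching the non-isolated $0$ inside a countable neighbourhood to make countermodels of the translated formula correspond to finite classical countermodels. No such construction, nor any argument that the coding is faithful in both directions, appears in your sketch; you in fact flag it yourself as ``the technical heart.'' As it stands the ``only if'' direction is unproved, and the proposed alternative of reducing a $\Pi^1_1$-complete set is both harder than necessary and unsupported by any indicated mechanism for simulating well-foundedness in the order structure of $V$ near $0$.
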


In all the remaining cases, i.e., logics with countable truth value set and those with uncountable truth value set where
there is a countable neighbourhood of $0$ and $0$ is not isolated; the respective
logics are not r.e.

\begin{remark}
    Only $G_\uparrow$ and finite valued G\"odel logics admit all quantifier shift rules and as a consequence they allow the equivalent prenex normal form. 
\end{remark}

\section{Toward Satisfiability and the Gluing Lemma}\label{2}

In this section, we introduce a crucial tool for analyzing the decidability of the satisfiability of the prenex subclasses in Gödel logic. 
We cut the truth set at the point $\omega$, we keep the lower part and we glue the upper part to 1.
In this way, we simply change the interpretations on atoms.
The ones below  \(\omega\) remain and those above $\omega$ will become $1$. In principle, it does not matter what value \(\omega\) we fix, but if $\omega$ is not isolated from the truth values set the construction will not work for the universal quantifier.
\begin{lemma}[Gluing lemma for formulas with \(\exists\)]\label{l - 2.1}
Let \(\mathcal{I}\) be an interpretation into \(V\subseteq[0,1]\). Let us fix a value \(\omega\in[0,1]\) and define 
\begin{center}
$ \mathcal{I}'(\mathcal{P} ) = \begin{cases}\mathcal{I}(\mathcal{P}) & \text{if\/ $\mathcal{I}(\mathcal{P} )\leq \omega$,} \\
               1     & \text{otherwise}\end{cases} $
\end{center}
for an atomic formula \(\mathcal{P}\) in \(\mathcal{L}^\mathcal{I}\). Then \(\mathcal{I}_\omega\) 
is an interpretation into \(V\) such that for any formula $\mathcal{B}$
\begin{center} 
 $ \mathcal{I}'(\mathcal{B} ) = \begin{cases}\mathcal{I}(\mathcal{B}) & \text{if\/ $\mathcal{I}(\mathcal{B} )\leq \omega$,} \\
               1     & \text{otherwise.}\end{cases} $
\end{center}
\end{lemma}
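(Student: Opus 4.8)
The plan is to prove the identity $\mathcal{I}'(\mathcal{B}) = f(\mathcal{I}(\mathcal{B}))$ for every formula $\mathcal{B}$ (in the $\exists$-fragment) by induction on the structure of $\mathcal{B}$, where $f\colon[0,1]\to[0,1]$ is the \emph{clamp-and-lift} map
\[
f(v) = \begin{cases} v & \text{if } v \le \omega, \\ 1 & \text{if } v > \omega. \end{cases}
\]
The first move is to recast the definition of $\mathcal{I}'$ on atoms as $\mathcal{I}'(\mathcal{P}) = f(\mathcal{I}(\mathcal{P}))$, so that the whole lemma becomes the single assertion that $\mathcal{I}'$ and $f\circ\mathcal{I}$ agree on \emph{all} formulas. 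Before starting the induction I would record two elementary facts that drive every case: $f$ is monotone non-decreasing, and it fixes both endpoints, $f(0)=0$ and $f(1)=1$. In particular $\mathcal{I}'$ takes values in $\{v\in V : v\le\omega\}\cup\{1\}\subseteq V$ (using $1\in V$), so it is a genuine $V$-interpretation, which also discharges the first clause of the statement.

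The base case $\bot$ is immediate since $f(0)=0$, and the atomic case holds by definition. The connectives $\wedge$ and $\vee$ are routine consequences of monotonicity, which yields $f(\min\{a,b\})=\min\{f(a),f(b)\}$ and $f(\max\{a,b\})=\max\{f(a),f(b)\}$, so the induction hypothesis propagates through $\min$ and $\max$. The implication case needs a short case analysis on $a=\mathcal{I}(A)$ and $b=\mathcal{I}(B)$: if $a\le b$, then $f(a)\le f(b)$ by monotonicity, so both $\mathcal{I}'(A\supset B)$ and $f(\mathcal{I}(A\supset B))=f(1)$ equal $1$; if $a>b$, then $f(\mathcal{I}(A\supset B))=f(b)$, and one verifies this agrees with the Gödel value computed from $f(a),f(b)$ in both subcases, namely $f(a)=f(b)$ (which forces $a>b>\omega$, whence $f(b)=1$) and $f(a)>f(b)$ (where the Gödel implication returns exactly $f(b)$).

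The decisive case is the existential quantifier, and here the asymmetry behind the lemma's name appears. Writing $s=\sup_u \mathcal{I}(A(u))$, I would argue that $f$ commutes with this supremum: if $s\le\omega$, then every $\mathcal{I}(A(u))\le s\le\omega$ lies in the region where $f$ is the identity, so $\sup_u f(\mathcal{I}(A(u)))=s=f(s)$; if $s>\omega$, then some $\mathcal{I}(A(u))$ already exceeds $\omega$ and is sent to $1$, forcing $\sup_u f(\mathcal{I}(A(u)))=1=f(s)$. The crucial point I would stress is that a supremum is an upper bound and hence is only ever approached from below, so the single direction in which $f$ is discontinuous—its upward jump immediately above $\omega$—is never exercised. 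I expect this to be the only genuine obstacle, and it is exactly what fails for $\inf$ and the universal quantifier: if the values $\mathcal{I}(A(u))$ approach $\omega$ strictly from above, then $f$ collapses them all to $1$ while $f(\inf)=f(\omega)=\omega$, breaking the identity unless $\omega$ is isolated from above in the value set. Accordingly the identity is stated and used for the $\exists$-fragment, with the universal case requiring the additional hypothesis that $\omega$ be isolated from above.
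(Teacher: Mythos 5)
Your proof is correct and follows essentially the same route as the paper: induction on formula complexity, with monotonicity handling $\wedge$, $\vee$, a case split on the G\"odel implication, and the key observation that a supremum is approached from below so the upward jump of the gluing map at $\omega$ is never exercised, while the infimum/universal case fails unless $\omega$ is isolated from above. Your reformulation via the clamp map $f$ commuting with the truth functions is a slightly cleaner packaging of the same argument, and your explicit check that $\mathcal{I}'$ is a $V$-interpretation (using $1\in V$) fills in a point the paper leaves implicit.
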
 
\begin{proof}
The proof is by induction on the complexity of the formulas that any subformula \(\mathcal{B}\) of \(\mathcal{A}\) with $\mathcal{I}(\mathcal{B} )\leq \omega$ has $\mathcal{I}'(\mathcal{B} )=\mathcal{I}(\mathcal{B} )$. For \(\mathcal{B}\) being an atomic formula it is obvious by definition. 
We distinguish cases according to the logical form of \(\mathcal{B}\):\\
$\mathcal{B} \equiv D \land E$.  If $\I(\mathcal{B}) \le \omega$, then, without loss of
    generality, assume
    $\I(\mathcal{B}) = \I(D)\le \I(E)$.  By induction hypothesis,
    $\I'(D) = \I(D)$ and $\I'(E) \ge \I(E)$, so $\I'(\mathcal{B}) = \I(\mathcal{B})$. If
    $\I(\mathcal{B})>\omega$, then $\I(D)>\omega$ and $\I(E)>\omega$, by induction
    hypothesis $\I'(D)=\I'(E)=1$, thus, $\I'(\mathcal{B})=1$.\\
    $\mathcal{B} \equiv D \lor E$.  If $\I(\mathcal{B}) \le v$, then, without loss of
    generality, assume
    $\I(\mathcal{B}) = \I(D) \ge \I(E)$.  By induction hypothesis,
    $\I'(D) = \I(D)$ and $\I'(E) = \I(E)$, so $\I'(\mathcal{B}) = \I(\mathcal{B})$.
    If $\I(\mathcal{B})>\omega$, then, again without loss of generality, $\I(\mathcal{B})=\I(D)>\omega$, by induction hypothesis $\I'(D)=1$, thus,
    $\I'(\mathcal{B})=1$.\\
    $\mathcal{B} \equiv D \impl E$.  Since $\omega < 1$, we must have
    $\I(D) > \I(E) = \I(\mathcal{B})$. By induction hypothesis,
    $\I'(D) \ge \I(D)$ and $\I'(E) = \I(E)$, so $\I'(\mathcal{B}) = \I(\mathcal{B})$.
    If $\I(\mathcal{B})>\omega$, then $\I(D)\ge\I(E)=\I(\mathcal{B})>\omega$, by induction
    hypothesis $\I'(D)=\I'(E)=\I'(\mathcal{B})=1$.\\
    $\mathcal{B} \equiv \qe xD(x)$. First assume that $\I(\mathcal{B})\le \omega$. Since $D(c)$
    evaluates to a value less or equal to $\omega$ in $\I$ and, by
    induction hypothesis, in $\I'$ also the supremum of these values
    is less or equal to $\omega$ in $\I'$, thus $\I'(\mathcal{B}) = \I(\mathcal{B})$.
    If $\I(\mathcal{B})>\omega$, then there is a $c$ such that $\I(D(c))>\omega$, by
    induction hypothesis $\I'(D(c))=1$, thus, $\I'(\mathcal{B})=1$.    
\end{proof}
The gluing argument presented above has an important limitation: it does not apply to universal quantifiers if the gluing point \( \omega \) is not isolated from above.
More generally, the gluing lemma holds for all formulas if and only if \( \omega \) is isolated from above in the truth value set.

As a consequence of the gluing lemma we have the following results:
\begin{theorem}\label{p - 2.1} 
For the following cases, 1-satisfiability in Gödel logics coincides with classical satisfiability:\\
1) the propositional case,\\ 
2) the first-order case where the truth value set is arbitrary, provided that 0 is isolated,\\
3) the prenex fragment, for any truth value set, \\
4) the existential fragment, for any truth value set.
\end{theorem}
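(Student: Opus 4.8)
The plan is to prove the two implications separately, since in G\"odel logics validity and satisfiability are not dual. The direction \emph{classical satisfiability $\Rightarrow$ $1$-satisfiability} holds in all four cases with no restriction: a classical model is a $\{0,1\}$-valued assignment, hence a $V$-interpretation because $0,1\in V$, and on $\{0,1\}$ the clauses (1)--(6) of Definition~\ref{goedel} reduce to the classical truth tables, so any formula true in the classical model receives value $1$ in $G_V$. All the content is in the converse, \emph{$1$-satisfiability in $G_V \Rightarrow$ classical satisfiability}, and the Gluing Lemma is the engine. The key observation is that gluing at $\omega=0$ turns $\I$ into a \emph{two-valued} interpretation $\I'$ with $\I'(\mathcal P)=1$ exactly when $\I(\mathcal P)>0$ and $\I'(\mathcal P)=0$ when $\I(\mathcal P)=0$; such an $\I'$ lands in $\{0,1\}\subseteq V$ and is therefore literally a classical model.

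For cases (1) and (4) the formula $A$ is $\forall$-free, so Lemma~\ref{l - 2.1} applies verbatim at $\omega=0$: from $\I(A)=1>0$ it gives $\I'(A)=1$, and as $\I'$ is classical, $A$ is classically satisfiable. For case (2), where $0$ is isolated, $0$ is in particular isolated from above, so by the remark following Lemma~\ref{l - 2.1} gluing at $\omega=0$ is sound for \emph{all} connectives and quantifiers, including $\forall$; hence $\I'(A)=1$ for an arbitrary first-order $A$ (not necessarily prenex), again producing a classical model.

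Case (3), prenex formulas over an arbitrary $V$, is the main obstacle, because for non-isolated $0$ (e.g. $V=[0,1]$) gluing at $0$ is genuinely \emph{unsound} for $\forall$: with $\I(P(u_n))=1/n$ one has $\I(\forall x\,P)=0$ but $\I'(\forall x\,P)=1$. The repair exploits that this failure occurs only when the universal subformula has value $0$, whereas we only ever propagate \emph{positive} values upward. Writing $A=Q_1x_1\cdots Q_kx_k\,M$ with $M$ quantifier-free, I would argue in two layers. First, for the quantifier-free matrix prove the full biconditional $\I(C)>0 \Leftrightarrow \I'(C)=1$ by induction on $C$; the $\supset$-step needs both directions, which is precisely why $M$ must be quantifier-free. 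Second, propagate only the forward implication $\I(B)>0\Rightarrow\I'(B)=1$ through the quantifier prefix: for $\exists x\,B$ a positive value supplies a witness $u$ with $\I(B(u))>0$, hence $\I'(B(u))=1$; for $\forall x\,B$, $\I(\forall x\,B)=\inf_u\I(B(u))>0$ forces \emph{every} instance to be positive, hence every $\I'(B(u))=1$ and $\I'(\forall x\,B)=1$. Since prenex form keeps every $\supset$ inside $M$, the forward implication is all that is required above the matrix, and it survives both quantifiers; finally $\I(A)=1>0$ yields $\I'(A)=1$ with $\I'$ classical.

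The crux is thus to see \emph{why} the prenex hypothesis is essential and cannot be dropped: the $\forall$-clause of the Gluing Lemma really does fail at a non-isolated $0$, and is recovered only because a positive value of a universally quantified subformula forces all its instances to be positive, and because prenex form confines every implication to the quantifier-free matrix, where the two-directional collapse is available. In writing this up I would check carefully that the forward-only induction through the prefix never tacitly invokes the converse --- it does not, exactly because no implication sits above $M$ --- and that $\I'$ remains a $V$-interpretation, which is immediate since its values are $0$ and $1$.
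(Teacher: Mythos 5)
Your proof is correct and takes essentially the same route as the paper's: the theorem is presented there precisely as a consequence of the Gluing Lemma, and your gluing at $\omega=0$, with the forward-only implication $\mathcal{I}(B)>0\Rightarrow\mathcal{I}'(B)=1$ propagated through the quantifier prefix, is exactly a formalization of the paper's remark that under the modified interpretation the universal quantifier ``becomes more true''. You spell out details the paper's sketch leaves implicit --- the two-layer induction for the prenex case, the isolation-from-above argument for case 2, and a cleaner justification of case 4 --- but there is no methodological difference.
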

\begin{proof}
The first two cases follow by induction on the complexity of the formulas.
3) The change of interpretation might change the value for the universal quantifier, but in the modified interpretations it becomes more true.
4) It follows from the fact that purely existential formulas are always classically satisfiable.
\end{proof}
\begin{lemma}
    The gluing lemma holds for all finite Gödel logics.
\end{lemma}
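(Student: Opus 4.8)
The plan is to observe that Lemma~\ref{l - 2.1} is already established for every formula built from $\bot$, the propositional connectives, and the existential quantifier; the only clause absent from its inductive proof is the universal quantifier. So to show that the gluing lemma holds for all formulas of a finite G\"odel logic, I would re-run the very same induction on formula complexity, reusing verbatim the cases for $\land$, $\lor$, $\impl$, and $\qe x D(x)$, and adjoin a single new case for $\mathcal{B} \equiv \qa x D(x)$. The induction hypothesis I carry is the full conclusion of the lemma applied to the immediate subformula: for each witness $c$, if $\I(D(c)) \le \omega$ then $\I'(D(c)) = \I(D(c))$, and if $\I(D(c)) > \omega$ then $\I'(D(c)) = 1$.

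The key step is this new universal-quantifier case, and it rests on one structural feature of finite logics. Let $V$ be finite. For $\mathcal{B} \equiv \qa x D(x)$ we have $\I(\mathcal{B}) = \inf\{\I(D(c)) : c \in U_\I\}$, and although the domain $U_\I$ may be infinite, the values $\I(D(c))$ all lie in the finite set $V$. Hence $\{\I(D(c)) : c \in U_\I\}$ is a finite subset of $[0,1]$ and its infimum is attained: there is a $c_0$ with $\I(D(c_0)) = \I(\mathcal{B})$. With this, both cases go through. If $\I(\mathcal{B}) \le \omega$, then $\I(D(c_0)) \le \omega$, so by the induction hypothesis $\I'(D(c_0)) = \I(D(c_0)) = \I(\mathcal{B})$, while for every other $c$ the induction hypothesis gives $\I'(D(c)) \ge \I(D(c)) \ge \I(\mathcal{B})$ (either $\I(D(c)) \le \omega$ and it is preserved, or $\I(D(c)) > \omega$ and it is raised to $1$); thus the minimum is still attained at $c_0$ and $\I'(\mathcal{B}) = \I(\mathcal{B})$. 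If $\I(\mathcal{B}) > \omega$, then every $\I(D(c)) \ge \I(\mathcal{B}) > \omega$, so the induction hypothesis forces $\I'(D(c)) = 1$ for all $c$, whence $\I'(\mathcal{B}) = 1$.

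The single genuine obstacle is precisely the attainment of the infimum in the first case, and this is exactly the point flagged in the discussion following Lemma~\ref{l - 2.1}: when $\omega$ is not isolated from above one can have witnesses with $\I(D(c)) > \omega$ for all $c$ yet $\inf_c \I(D(c)) = \omega$, so that $\I(\mathcal{B}) = \omega$ fails to be preserved (each $\I'(D(c))$ jumps to $1$ and $\I'(\mathcal{B}) = 1 \ne \omega$). Finiteness of $V$ removes this danger outright, since a finite truth value set admits no strictly decreasing sequence, so every point is isolated from above and every relevant infimum is in fact a minimum. I therefore expect no difficulty beyond correctly invoking finiteness to license the attained-infimum argument, and the statement then follows as the instance, recorded in the excerpt, of the general fact that the gluing lemma holds for all formulas if and only if $\omega$ is isolated from above in the truth value set.
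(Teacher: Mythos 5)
Your proposal is correct and follows essentially the same route as the paper, whose proof is a one-line appeal to the fact that in finite-valued Gödel logics all points are isolated and there is no infinite descending chain of truth values. You simply make this explicit by adding the $\forall$-case to the induction of Lemma~\ref{l - 2.1} and using finiteness of $V$ to guarantee that every relevant infimum is attained, which is exactly the content of the paper's remark.
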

\begin{proof}
    It is obvious since in finite-valued Gödel logics the points are isolated and there is no infinite descending chain of truth values.
\end{proof}
We know that the gluing lemma works whenever there is no universal quantifier or the gluing point is isolated, which is the case for $G_\uparrow$.
\begin{proposition}\label{p - 2.2}\cite{Baaz-Leitsch-Zach:FirstOrderGoedel}
\(G_\uparrow =\bigcap_{m\geq 2} G_m\) is an intersection of all
finite valued Gödel logics, as a consequence the gluing lemma holds for $G_\uparrow$.
\end{proposition}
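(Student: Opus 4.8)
The statement really has two halves: the identity \(G_\uparrow=\bigcap_{m\ge2}G_m\), which is the result cited from \cite{Baaz-Leitsch-Zach:FirstOrderGoedel}, and the gluing lemma for \(G_\uparrow\), which is what I would actually argue, using the identity only to extract a single semantic feature of \(V_\uparrow\). For the identity itself, the easy inclusion is cheap: since \(V_m\subseteq V_\uparrow\) for every \(m\), Proposition \ref{p - 1.1} gives \(G_\uparrow\subseteq G_m\) for all \(m\), hence \(G_\uparrow\subseteq\bigcap_{m\ge2}G_m\); the reverse inclusion is the cited content and I would simply invoke it. The only consequence I need is that \(C_\uparrow=\qe x(A(x)\impl\qa y A(y))\) is valid in \(G_\uparrow\) — it is valid in every \(G_m\) by Proposition \ref{p - 1.2} and Remark \ref{r - 1.2} — i.e.\ that in \(V_\uparrow\) every infimum of truth values is attained as a minimum. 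This is the same as saying every \(\omega\in[0,1]\) is isolated from above in \(V_\uparrow=\{1\}\cup\{1-1/k:k\ge1\}\): the unique accumulation point is \(1\), approached strictly from below, so no point of \(V_\uparrow\) is the limit of \(V_\uparrow\)-values descending to it, and above any \(\omega<1\) there is a gap before the least element of \(V_\uparrow\) exceeding \(\omega\).

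The plan for the consequence is then to re-run the induction of Lemma \ref{l - 2.1} verbatim in the propositional and existential cases and supply the one case that was deliberately omitted there, namely \(\mathcal{B}\equiv\qa x D(x)\). By the induction hypothesis \(\I'(D(c))\ge\I(D(c))\) for every witness \(c\), with equality whenever \(\I(D(c))\le\omega\). If \(\I(\mathcal{B})=\inf_c\I(D(c))>\omega\), then each \(\I(D(c))>\omega\), so \(\I'(D(c))=1\) for all \(c\) and \(\I'(\mathcal{B})=1\), as required. If \(\I(\mathcal{B})\le\omega\), then, since infima in \(V_\uparrow\) are minima, there is a witness \(c_0\) with \(\I(D(c_0))=\I(\mathcal{B})\le\omega\); hence \(\I'(D(c_0))=\I(D(c_0))=\I(\mathcal{B})\), while \(\I'(D(c))\ge\I(D(c))\ge\I(\mathcal{B})\) for every \(c\). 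Taking the infimum over \(c\) yields \(\I'(\mathcal{B})=\I(\mathcal{B})\), completing the inductive step and hence the lemma for \(V_\uparrow\).

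The main obstacle is precisely this last case, and it is the obstacle the general remark after Lemma \ref{l - 2.1} flags: the danger is a universally quantified subformula whose instances all evaluate strictly above \(\omega\) while their infimum equals \(\omega\), for then the glued interpretation would raise every instance to \(1\) and read the universal as \(1\) rather than \(\omega\). This failure is exactly what happens when \(\omega\) is a limit from above, and it cannot occur in \(V_\uparrow\), where the infimum is always realized by an actual witness — equivalently, where \(C_\uparrow\) is valid. This is the single point at which the identification \(G_\uparrow=\bigcap_{m\ge2}G_m\) enters the argument. I would close by noting that the argument places no restriction on the gluing point, so for \(G_\uparrow\) one may fix \(\omega\) wherever the intended 1-satisfiability application requires, and that the same reasoning recovers the earlier observation that the lemma holds for all formulas exactly when \(\omega\) is isolated from above.
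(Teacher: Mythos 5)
Your proof is correct, but it runs in the opposite direction from the paper's. The paper's proof sketch uses the gluing construction to establish the identity \(G_\uparrow=\bigcap_{m\ge2}G_m\): a formula not valid in \(G_\uparrow\) has a countermodel with value \(d=1-1/k<1\), and gluing everything above \(d\) to \(1\) collapses the countermodel into a finite truth value set, yielding a counterexample in some \(G_m\); the applicability of gluing at points of \(V_\uparrow\) (every point below \(1\) is isolated from above) is taken for granted there, so strictly speaking the paper derives the identity \emph{from} gluing rather than the other way around, despite the ``as a consequence'' phrasing. You instead cite the identity, prove the easy inclusion via Proposition \ref{p - 1.1}, and then establish the gluing lemma for \(V_\uparrow\) directly by supplying the universal-quantifier case omitted from Lemma \ref{l - 2.1}: since \(V_\uparrow\) has order type \(\omega+1\), every infimum of realized values is a minimum, so a witness \(c_0\) with \(\I(D(c_0))=\I(\mathcal{B})\le\omega\) pins the glued infimum down to \(\I(\mathcal{B})\), and the case \(\I(\mathcal{B})>\omega\) is immediate. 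That inductive step is carried out correctly. Your approach buys a non-circular, self-contained dependency structure (gluing for \(G_\uparrow\) rests on the order structure of \(V_\uparrow\), not on the intersection theorem), while the paper's route shows what gluing is \emph{for}, namely proving the intersection identity itself. One small overstatement on your side: the detour through \(C_\uparrow\in\bigcap_m G_m=G_\uparrow\) is sound but unnecessary, since ``every infimum in \(V_\uparrow\) is a minimum'' follows immediately from \(V_\uparrow\) being well-ordered; the identity therefore plays no essential role in your argument at all, which if anything makes your proof of the gluing claim stronger than its advertised dependence suggests.
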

\begin{proof}
The idea of the proof is the following: a formula valid in $G_\uparrow$ is valid in all finite-valued Gödel logics. On the other hand, the formula not valid in $G_\uparrow$ is not valid in at least one finite-valued Gödel logic. Because the counterexample has a certain finite value, we glue everything from above this value to $1$, then we get a counterexample in finite-valued Gödel logic.
\end{proof}


\begin{remark}\label{r - 2.1}
As a consequence, we observe that $1$-unsatisfiability coincides across all Gödel logics in which $0$ is an isolated point. Moreover, the problem of refuting $1$-satisfiability is decidable, since it reduces to classical logic, where this notion is recursive. 
The same applies to the existential and prenex fragments: refutation of $1$-satisfiability remains recursive, and the relevant Gödel logics agree on this notion whenever $0$ is isolated. 
This highlights a key contrast: while $1$-satisfiability is effectively refutable in these cases, the validity of the negation of a formula (i.e., $\neg A$ being valid) is, in general, \emph{not} recursive.
\end{remark}



\begin{definition}[BS class w.r.t. 1-satisfiability]
A first-order formula $\varphi$ is said to be in the
\emph{Bernays--Sch\"onfinkel class}
if it is of the form
$\exists \bar{x}\,\forall \bar{y}\; \psi(\bar{x},\bar{y})$,
where:
\begin{enumerate}
    \item $\psi$ is a quantifier-free formula,
    \item $\bar{x}$ is a (possibly empty) sequence of existentially quantified variables,
    \item $\bar{y}$ is a (possibly empty) sequence of universally quantified variables,
    \item the signature contains \emph{no function symbols of positive arity}
    (i.e. only constant symbols and predicate symbols).
\end{enumerate}
We refer to $\exists^\ast \forall^\ast$-prefix formulas over a function-free vocabulary as the $BS$ class.
\end{definition}

%
As an immediate consequence of the gluing lemma, we have:
\begin{theorem}
For prenex classes w.r.t. 1-satisfiability the following holds:
    \begin{itemize}
        \item It is decidable in all Gödel logics iff it is classically decidable.
        \item 1-satisfiable sentences of these prenex classes coincide for all Gödel logics.
    \end{itemize}
\end{theorem}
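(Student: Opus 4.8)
The plan is to deduce the statement directly from Theorem~\ref{p - 2.1}, case~3, which is where the genuine work---the gluing argument---has already been done. Recall that a prenex class is nothing but a set of prenex sentences singled out by a fixed shape of the quantifier prefix (the $BS$ class $\exists^\ast\forall^\ast$ being one instance). Since every member of such a class is in particular a prenex formula, Theorem~\ref{p - 2.1}, case~3, applies verbatim: for an arbitrary truth value set $V$, a prenex sentence $\varphi$ is $1$-satisfiable in $G_V$ if and only if it is classically satisfiable. I would first record this as the single observation on which both bullet points rest.

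For the coincidence claim, I would argue as follows. The equivalence just quoted identifies, for every $V$, the set of $1$-satisfiable prenex sentences of the class with one and the same set---namely the classically satisfiable ones---independently of $V$. Hence for any two G\"odel sets $V$ and $V'$ the corresponding sets of $1$-satisfiable sentences of the class agree, each being equal to the classical satisfiability set. This is exactly the assertion that the $1$-satisfiable sentences of the prenex class coincide across all G\"odel logics.

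For the decidability claim, I would note that the above set equality means the decision problem ``is $\varphi$ $1$-satisfiable in $G_V$?'' is, for $\varphi$ ranging over the class, literally the same problem as ``is $\varphi$ classically satisfiable?''. Therefore the former is decidable precisely when the latter is, and this holds uniformly in $V$. Taking $V=\{0,1\}$ recovers classical logic itself, so decidability in all G\"odel logics forces classical decidability, while classical decidability feeds back through the $V$-wise equality to yield decidability in every $G_V$; the two conditions are thus equivalent.

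I do not expect a real obstacle here, since the content is inherited from Theorem~\ref{p - 2.1}, case~3. The only point deserving care is the direction ``$1$-satisfiable in $G_V$ implies classically satisfiable,'' which is the nontrivial half of that theorem and is precisely where the gluing lemma (Lemma~\ref{l - 2.1}) is invoked: passing to the glued interpretation can only raise the value of a universally quantified subformula (``it becomes more true''), so a witness evaluating the prenex sentence to $1$ survives the collapse to a two-valued model. The reverse direction is immediate, because $\{0,1\}\subseteq V$ makes every classical model a $V$-interpretation on which the G\"odel and classical clauses agree.
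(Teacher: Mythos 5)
Your proposal is correct and follows essentially the same route as the paper, which states this theorem as an immediate consequence of the gluing lemma via Theorem~\ref{p - 2.1}, case~3 (1-satisfiability of prenex sentences coincides with classical satisfiability for every truth value set), exactly the single observation from which you derive both bullets. Your added care about the nontrivial direction --- that gluing can only raise the value of universally quantified subformulas, so a value-$1$ witness survives the collapse to a two-valued interpretation --- matches the paper's own justification of Theorem~\ref{p - 2.1}, case~3 verbatim.
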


\begin{theorem}\label{theorem1}
1-satisfiability in  BS class is decidable for all G\"odel logics.
\end{theorem}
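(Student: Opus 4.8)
The plan is to reduce the question of $1$-satisfiability in an arbitrary Gödel logic to ordinary classical satisfiability, and then to invoke the classical Bernays--Schönfinkel decidability result. The key observation is that a $BS$ sentence $\exists\bar{x}\,\forall\bar{y}\,\psi(\bar{x},\bar{y})$ is a prenex sentence, so Theorem~\ref{p - 2.1}(3) (itself a consequence of the gluing lemma, Lemma~\ref{l - 2.1}) applies: for every truth value set $V$, such a sentence is $1$-satisfiable in $G_V$ if and only if it is classically satisfiable. Since classical satisfiability of the $BS$ class is decidable, decidability in every Gödel logic follows, and moreover the decision is independent of $V$.

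More concretely, I would proceed as follows. First, record that the defining prefix $\exists^{\ast}\forall^{\ast}$ over a function-free vocabulary places every $BS$ sentence inside the prenex fragment, so the hypotheses of Theorem~\ref{p - 2.1}(3) and of the preceding theorem on prenex classes are met. Second, establish the two directions of the coincidence. The easy direction is that any classical model, read as a Gödel interpretation taking values in $\{0,1\}\subseteq V$, evaluates every connective and quantifier exactly as in classical logic (since $\min$, $\max$ and the Gödel implication restricted to $\{0,1\}$ are the classical operations, and $\inf$, $\sup$ over $\{0,1\}$ are classical $\forall$, $\exists$); hence a classical model makes the sentence take value $1$. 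The converse direction is where the gluing lemma does the work: given an interpretation $\mathcal{I}$ with $\mathcal{I}(A)=1$, one passes to a glued interpretation in which all atoms evaluating above a fixed cut become $1$, and the content of Theorem~\ref{p - 2.1}(3) is precisely that for prenex formulas this modification only makes the formula ``more true'', so that a genuine $\{0,1\}$-valued (classical) witness can be extracted. Third, invoke the classical theorem of Bernays and Schönfinkel: satisfiability for $\exists^{\ast}\forall^{\ast}$ function-free sentences is decidable, via the finite Herbrand expansion over the constants occurring in $A$. Combining these three steps yields an effective procedure deciding $1$-satisfiability of any $BS$ sentence in any $G_V$.

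I expect the main obstacle to be the converse direction of the coincidence, namely showing that Gödel $1$-satisfiability of the prenex sentence entails classical satisfiability. The delicate points are that the value $1$ of an existential is a supremum that need not be attained, and that the universal quantifier is an infimum whose value the gluing construction can alter; the gluing lemma is stated for the existential case and must be combined with the observation (encapsulated in Theorem~\ref{p - 2.1}(3)) that for prenex sentences the universal quantifier becomes only more true under gluing. Once this reduction is in hand, effectiveness and the uniformity in $V$ are immediate: the classical Bernays--Schönfinkel procedure depends only on the constants and predicates of $A$, not on the truth value set, so the same decision answers the question in every Gödel logic at once, which also yields the stated corollary that $1$-satisfiable $BS$ sentences are $1$-satisfiable in all Gödel logics.
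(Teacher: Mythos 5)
Your proposal follows essentially the same route as the paper: the paper's proof simply notes that $1$-satisfiability of $BS$ (prenex) sentences coincides with classical satisfiability --- via Theorem~\ref{p - 2.1}(3), a consequence of the gluing lemma (Lemma~\ref{l - 2.1}) --- and then appeals to the classical Bernays--Sch\"onfinkel decision procedure, exactly as you do. Your write-up is a correct and usefully more explicit version of this argument, in particular in isolating the converse direction (Gödel $1$-satisfiability implies classical satisfiability) as the point where gluing does the real work.
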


\begin{proof}
The proof is obvious as $1$-satisfiability coincides with classical satisfiability and, therefore, is decidable.
\end{proof}

 \begin{corollary}
 1-satisfiability of monadic fragments is always decidable if 0 is isolated.
\end{corollary}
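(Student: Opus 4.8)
The plan is to derive this corollary directly from Theorem~\ref{theorem1} together with the gluing lemma (Lemma~\ref{l - 2.1}) and its consequence Theorem~\ref{p - 2.1}. The key observation is that the monadic fragment, when written in prenex normal form, falls within the scope of the results already established: a monadic first-order formula has a prenex normal form whose matrix is built from unary predicates, and over a function-free monadic signature the Herbrand-style analysis reduces validity or 1-satisfiability to a finite propositional problem. So the main work is to reduce the monadic case to the prenex case already handled, rather than to prove anything new about the semantics.

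\medskip

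First I would recall from Theorem~\ref{p - 2.1}, cases (2) and (3), that when $0$ is isolated, 1-satisfiability in every G\"odel logic coincides with classical satisfiability for the first-order case with arbitrary truth value set, and likewise for the prenex fragment. The hypothesis ``$0$ is isolated'' is exactly what makes case (2) applicable, so 1-satisfiability of \emph{any} first-order formula in such a G\"odel logic agrees with classical satisfiability. Second, I would invoke the classical fact that the monadic fragment of first-order logic (without function symbols of positive arity, and without equality) is decidable for satisfiability: every monadic formula is classically equivalent to one in the Bernays--Sch\"onfinkel prefix form after standard quantifier manipulations, and in any case the monadic class enjoys the finite model property with a computable bound on model size, so its satisfiability is decidable.

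\medskip

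Combining these two ingredients yields the corollary: given a monadic formula $\varphi$ and a G\"odel logic $G_V$ in which $0$ is isolated, 1-satisfiability of $\varphi$ in $G_V$ is equivalent to classical satisfiability of $\varphi$ by Theorem~\ref{p - 2.1}(2), and classical satisfiability of a monadic formula is decidable; hence 1-satisfiability of $\varphi$ is decidable in $G_V$. I would also note, as in Remark~\ref{r - 2.1}, that the decision procedure is uniform across all such logics, since the coinciding notion is the single classical one.

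\medskip

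The step I expect to carry the most weight is the appeal to the classical decidability of the monadic fragment; the G\"odel-specific content is entirely absorbed by the gluing lemma via Theorem~\ref{p - 2.1}, so no genuinely new argument about G\"odel semantics is needed here. The only subtlety worth flagging is the role of the isolation hypothesis: without ``$0$ isolated'' one could only appeal to case (3) (the prenex fragment), and one would then need to first put the monadic formula into prenex form, so stating the result for isolated $0$ lets the argument go through for arbitrary monadic formulas without a prenexation step.
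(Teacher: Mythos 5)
Your proposal is correct and matches the paper's intended argument: the paper derives this corollary (without spelling out a proof) exactly from Theorem~\ref{p - 2.1}(2) and Remark~\ref{r - 2.1} --- when $0$ is isolated, $1$-satisfiability of arbitrary formulas coincides with classical satisfiability, which for the monadic fragment is decidable via the classical finite model property. One minor caveat: your side remark that a monadic formula is classically equivalent to a Bernays--Sch\"onfinkel prefix form is inessential (and dubious as stated), and your closing observation should be sharpened --- without ``$0$ isolated'' the prenex route of case (3) would not merely require an extra step but would genuinely fail in most G\"odel logics, since by the paper's remark only $G_\uparrow$ and the finite-valued logics admit the quantifier shifts needed for prenexation.
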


\section{Skolemization}
\begin{definition}[weak and strong quantifiers]
Let $A$ be a formula. A strong quantifier in $A$ is a universal quantifier which occurs in a positive position in $A$, or an existential quantifier which occurs in a negative position in $A$. it is dual for weak quantifiers.
\end{definition}
Intuitively, strong quantifiers are those whose witnesses must be fixed in advance
(independent of further context), while weak quantifiers allow variation depending on the context.
This distinction is purely syntactic but corresponds exactly to what the Skolemization
procedures require: strong quantifiers introduce Skolem functions, while weak quantifiers represent context parameters.
This is precisely why the Skolem term for a strong quantifier may contain all weak variables in
its scope, and \emph{only} those.
\begin{definition}[Skolemization]
    Given a formula $A$, its \emph{validity–Skolemization} $A^S$ is obtained by:
\begin{itemize}
  \item replacing every strong existential quantifier
    \(
    \exists x\, B(x)
    \)
    by
    \(
    B(f(\bar y))
    \),
    where $\bar y$ is the list of weak universals in its scope;
  \item replacing every strong universal quantifier
    \(
    \forall x\, B(x)
    \)
    by
    \(
    B(g(\bar y)),
    \)
    where $\bar y$ is the list of weak existentials in its scope.
\end{itemize}
The \emph{1–satisfiability} Skolemization proceeds identically but dual (replacing weak quantifiers).
\end{definition}
\begin{example} 
Consider the formula $\forall x \exists y\forall z A(x,y,z)$. Its validity Skolemization is $\exists y A(c,y,f(c))$, and its $1$-satisfiability Skolemization is $\forall x \forall z A(x,g(x),z)$.
\end{example}
Semantic Skolemization replaces strong quantifiers exactly as above and preserves the intended
semantic property (validity or 1-satisfiability).  
\emph{Syntactic} Skolemization, in contrast, applies the classical rules
\[
\forall x\,B(x) \leadsto B(c) , \qquad
\exists x\,B(x) \leadsto B(f(\bar y)),
\]
without respecting polarity.  
In Gödel logics this may change the truth value of formulas.

\begin{example} 
 \[\left( (\forall x A(x)\supset B) \supset \exists x (A(x)\supset B) \right)^S =  \]
\[( A(c)\supset B) \supset \exists x (A(x)\supset B)  \]
The Skolemization is only valid if the infimum is a minimum in every interpretation.
 \[\left( ( A\supset  \exists xB(x)) \supset \exists x (A\supset B(x)) \right)^S =  \]
  \[( A\supset B(c)) \supset \exists x (A\supset B(x))  \]
  The Skolemization in this case is only valid if the supremum is a maximum in every interpretation.  
\end{example}
Unlike in classical logic, the infimum may not correspond to a minimum, there might be no object in the domain that achieves the exact value of the infimum. This failure to realize the infimum is the core reason Skolemization typically breaks down. Soundness and completeness for the Skolemization of the intermediate logics is given in \cite{QFS}.
  Consequently, the only Gödel logics that do Skolemize are $G_\uparrow$ and finite-valued.
However, for prenex fragments, this issue can be circumvented. If one accepts a version of the axiom of choice, one can select an object whose value is arbitrarily close to the infimum. This selection yields a definable Skolem function, which, in the case of existential quantifiers following universal ones, becomes a function of the universally quantified variables.

In G\"odel logics, valid prenex formulas can be sharpened to validity equivalent purely existential formulas by Skolemization.


\begin{lemma}[Validity Skolemization]\label{lemma1} 
    For all prenex formulas $ Q\bar{x} A(\bar{x})$ and all G\"odel logics $G$
    \begin{align*}
       \Gamma \Vdash_G Q\bar{x} A(\bar{x})
\Longleftrightarrow \Gamma \Vdash_G (Q\bar{x} A(\bar{x}))^S
    \end{align*}
    where $ Q\bar{x} $ is a quantifier prefix and $A(\bar{x})$ is a quantifier-free formula.
\end{lemma}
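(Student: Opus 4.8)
The plan is to reduce the statement to a single one-step Skolemization lemma and then iterate, arguing by induction on the number $k$ of \emph{strong} quantifiers in the prefix $Q\bar x$ — here these are exactly the universal quantifiers, since the whole formula sits in a positive position. If $k=0$ the prefix is purely existential, $(Q\bar x A)^S = Q\bar x A$, and there is nothing to prove. For the inductive step it suffices to show that peeling the leftmost universal quantifier preserves $G$-entailment: everything to its left is existential, so the Skolem function introduced for it depends exactly on those variables, matching the definition of $(\cdot)^S$, and the result is again a prenex formula with one fewer universal quantifier, to which the induction hypothesis applies. Throughout, the Skolem symbols are fresh and do not occur in $\Gamma$, so any interpretation of $\Gamma$ extends freely to them without changing the value of $\Gamma$.

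For the one-step claim, write the formula as $\Phi = \exists y_1\cdots\exists y_k\,\forall x\,\Psi$ with $\Psi$ an arbitrary (possibly quantified) formula, and set $\Phi^{\ast}=\exists y_1\cdots\exists y_k\,\Psi[x:=g(\bar y)]$ for a fresh $g$; I claim $\Gamma\Vdash_G \Phi \iff \Gamma\Vdash_G \Phi^{\ast}$. Under any interpretation $\I$ the two values are
\[
\I(\Phi)=\sup_{\bar v}\,\inf_{u}\ \I[\bar y:=\bar v,\,x:=u](\Psi),\qquad \I(\Phi^{\ast})=\sup_{\bar v}\ \I[\bar y:=\bar v,\,x:=g^{\I}(\bar v)](\Psi).
\]
For the forward direction, for \emph{any} interpretation of $g$ and any $\bar v$ one has $\inf_{u}\I[\bar y:=\bar v,x:=u](\Psi)\le \I[\bar y:=\bar v,x:=g^{\I}(\bar v)](\Psi)$, hence $\I(\Phi)\le\I(\Phi^{\ast})$; so whenever $\I(\Gamma)=\{1\}$ forces $\I(\Phi)=1$ it also forces $\I(\Phi^{\ast})=1$, and $\Gamma\Vdash_G\Phi$ yields $\Gamma\Vdash_G\Phi^{\ast}$.

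The backward direction is the heart of the argument and I prove it by contraposition. Suppose $\Gamma\nVdash_G\Phi$, so there is $\I$ with $\I(\Gamma)=\{1\}$ and $\I(\Phi)=s<1$. Fix $\epsilon>0$ with $s+\epsilon<1$. For every tuple $\bar v$ we have $\inf_{u}\I[\bar y:=\bar v,x:=u](\Psi)\le s<s+\epsilon$, so by definition of the infimum there exists a domain element $u$ with $\I[\bar y:=\bar v,x:=u](\Psi)<s+\epsilon$. Invoking the axiom of choice, select one such $u$ for each $\bar v$ and let this choice interpret $g$. Then every value entering the supremum defining $\I(\Phi^{\ast})$ is $<s+\epsilon$, so $\I(\Phi^{\ast})\le s+\epsilon<1$ while $\I(\Gamma)=\{1\}$ is unchanged; hence $\Gamma\nVdash_G\Phi^{\ast}$. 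Iterating the one-step claim along the induction completes the proof.

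I expect the backward direction to be the main obstacle, precisely because in a general Gödel logic the infimum $\inf_{u}\I[\ldots](\Psi)$ need not be attained by any domain element — the very phenomenon that makes unrestricted Skolemization unsound, as in the non-prenex examples above. The decisive observation is that for validity of a \emph{prenex} formula one only needs the Skolem value to remain \emph{below $1$}, not to equal the infimum: an $\epsilon$-approximate witness, chosen uniformly in $\bar v$ by the axiom of choice, already pushes the supremum below $1$. Since this uses nothing about the topology of $V$ — no isolation of $0$, no perfect set — the argument goes through for \emph{every} Gödel logic $G$, which is exactly the content of the lemma.
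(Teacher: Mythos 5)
Your proof is correct and takes essentially the same route as the paper's: reduce to a one-step Skolemization of the leftmost universal quantifier (forward direction immediate because the Skolem instance dominates the infimum; backward direction by contraposition, using the axiom of choice to select $\epsilon$-approximate witnesses that keep the supremum strictly below $1$), then iterate by induction over the prefix. Your uniform choice of a single $\epsilon$ with $s+\epsilon<1$ even subsumes the paper's case split on whether the infimum is attained and avoids its unnecessary appeal to density of the truth value set.
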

\begin{proof}
It is sufficient to prove with $A$ arbitrary and $f$ a new function symbol:
    $$\Gamma \Vdash_G  \exists \overline{x} \forall y A(\overline{x},y)\Leftrightarrow\Gamma \Vdash_G  \exists \overline{x} A(\overline{x},f(\overline{x})).$$
The result follows by induction. $(\Rightarrow)$ The direction from left to right is obvious.\\
$(\Leftarrow)$ For the other direction, recall that in Gödel semantics, existential and universal quantifiers are evaluated as the supremum and infimum of the truth values of their instances.

If $\nVdash_G \exists \overline{x} \forall y A(\overline{x},y)$ then for some interpretation $\mathcal{I}$ 
\begin{align*}
    \sup\{d_{\overline{c}}\;|\; \mathcal{I}(\forall y A(\overline{c},y))=d_{\overline{c}}\}\leq d <1.
\end{align*}

Now, suppose that a universally quantified formula $\forall y\, A(c, y)$ has an interpretation value $d_c < 1$. This means that the infimum is strictly below $1$, and therefore the formula is not valid.

The crucial point is how the infimum is handled. We have to consider two cases:
\begin{enumerate}
    \item If the infimum is a minimum, the witness is exact i.e., attained by some specific $y_0$ for a given $x$, then we can directly define a Skolem function $f(x) = y_0$. In this case, the Skolemized formula
$\exists x\, A(x, f(x))$
is also not valid, since it evaluates to $d < 1$.
    \item However, when the infimum is not a minimum, we must use an approximation strategy: Using the axiom of choice, we can select a witness $f(c)$ such that $A(c, f(c))$ evaluates to a value still strictly below $1$, but arbitrarily close to the infimum. 
    \begin{itemize}
    \item We select a value $\varepsilon > 0$ (e.g. $\frac{1-d}{2}$) such that $d_{\overline{c}} + \varepsilon < 1$ (since $d_c$ was the greatest lower bound).
      \item Using the axiom of choice we can assign a value for every $f(\overline{c})$ such that $\mathcal{I}(A(\overline{c},f(\overline{c}))$ is in between $d_{\overline{c}}$ and $\varepsilon$ and the existence is guaranteed due to the density and linearity of the truth value set in Gödel logics.    
   \end{itemize}
   This process is repeated inductively for each quantifier, maintaining the falsity of the formula through approximation. The value increases slightly but stays below $1$, preserving non-validity.
\end{enumerate}

As a consequence 
\begin{align*}
    \sup\{d_{\overline{c}}+\frac{1-d}{2}\;|\; \mathcal{I}(A(\overline{c},f(\overline{c})))\leq \\
    d_{\overline{c}}+\frac{1-d}{2}\}\leq d+\frac{1-d}{2}< 1
\end{align*}
and thus $\Gamma \nVdash_G \exists \overline{x} A(\overline{x},f(\overline{x}))$. 
\end{proof}

\begin{remark}
    This is the key step in semantic Skolemization: it ensures that the existential formula $\exists x\, A(x, f(x))$ remains invalid in any model where the original formula $\exists x\, \forall y\, A(x, y)$ is invalid.
In essence, the construction is two-directional:
\begin{itemize}
    \item If the original formula is valid, so is its Skolemized form.
    \item If the original formula is \emph{not} valid, we can select witnesses to keep the value below $1$ and construct a countermodel for the Skolemized form.
\end{itemize}

\end{remark}


In contrast to validity-Skolemization, 1-satisfiability-Skolemization is straightforward by the gluing argument.


\begin{proposition}
    A prenex formula is 1-satisfiable if its Skolemization is 1-satisfiable.
\end{proposition}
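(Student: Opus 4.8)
The plan is to reduce the statement to classical Skolemization, exploiting the coincidence of 1-satisfiability with classical satisfiability on prenex formulas established in Theorem~\ref{p - 2.1}(3). First I would record the shape of the 1-satisfiability Skolemization of a prenex formula $A = Q_1 x_1 \cdots Q_n x_n\,\psi$: since every quantifier occurs in a positive position, the weak quantifiers are exactly the existential ones, and replacing each weak $\exists x_i$ by a Skolem term $f_i(\bar y)$ (with $\bar y$ the universal variables to its left) leaves a purely universal prenex formula $A^{S} = \forall \bar y\,\psi'(\bar y)$. This is precisely classical Skolemization for satisfiability, as already noted in the introduction, and in particular $A^{S}$ is again prenex, so Theorem~\ref{p - 2.1}(3) applies to it as well.

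For the stated direction (Skolemization 1-satisfiable $\Rightarrow$ formula 1-satisfiable) I would give a direct monotonicity argument that needs no gluing at all. Let $\I$ be an interpretation with $\I(A^{S}) = 1$. In G\"odel semantics $\I(\exists x\,\phi(x)) = \sup_u \I(\phi(u)) \ge \I(\phi(t))$ for every term $t$, so reinstating an existential quantifier in place of its Skolem term can only raise the truth value. Working from the innermost Skolem term outward one obtains $\I(A) \ge \I(A^{S}) = 1$, hence $\I(A) = 1$ and $A$ is 1-satisfiable; here $\I$ is simply read in the language of $A$, forgetting the Skolem symbols. This already proves the \emph{if} direction.

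For completeness I would also record the converse, which is where the phrase ``straightforward by the gluing argument'' really bites: by Theorem~\ref{p - 2.1}(3), 1-satisfiability coincides with classical satisfiability for both $A$ and its (universal, hence prenex) Skolemization $A^{S}$, and classically Skolemization preserves satisfiability, so $A$ is 1-satisfiable iff $A^{S}$ is. The step I expect to be the main obstacle is the \emph{semantic} converse: given $\I(A)=1$, the suprema witnessing the existentials need not be attained, so one cannot in general select exact Skolem witnesses. This is exactly the gap closed by gluing: fixing any $\omega<1$ and applying Lemma~\ref{l - 2.1} sends all values exceeding $\omega$ to $1$, turning the approximate witnesses into exact ones, while the universal quantifiers only become ``more true'' under the modified interpretation (as in the proof of Theorem~\ref{p - 2.1}(3)), which is harmless for 1-satisfiability. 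Thus the delicate density and axiom-of-choice construction needed for validity-Skolemization in Lemma~\ref{lemma1} is avoided entirely in the 1-satisfiable case.
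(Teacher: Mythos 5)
Your proof is correct, but for the direction actually asserted in the proposition you take a genuinely different route from the paper. The paper's proof is a two-line reduction entirely through classical logic: by Theorem~\ref{p - 2.1}(3), 1-satisfiability of a prenex formula coincides with classical satisfiability, and classical Skolemization preserves satisfiability, so the claim follows by composing these equivalences (your ``for completeness'' paragraph is essentially the paper's whole proof, and you rightly add the detail the paper leaves implicit, namely that $A^{S}$ is itself prenex so the theorem applies to it too). Your monotonicity argument, by contrast, proves the stated implication semantically and without any detour: since $\I(\exists x\,\phi(x)) \ge \I(\phi(t))$ and the prenex prefix is a monotone context (both $\inf$ and $\sup$ are monotone), reinstating the existentials from the innermost Skolem term outward gives $\I(A) \ge \I(A^{S})$ for the \emph{same} interpretation, read in the reduced language. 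This buys something the paper's proof does not: it needs neither the gluing lemma nor the classical correspondence, it works uniformly for every truth value set, and it produces an explicit witnessing interpretation together with the quantitative bound $\I(A)\ge\I(A^{S})$ rather than a bare satisfiability transfer. What the paper's route buys instead is the full equivalence in one stroke and a transparent explanation of the introduction's remark that 1-satisfiability Skolemization ``coincides with classical Skolemization.'' You also correctly locate where gluing genuinely matters --- the converse direction, where the suprema witnessing $\I(A)=1$ need not be attained so exact Skolem witnesses cannot be chosen --- which is precisely the content hidden in the paper's appeal to Theorem~\ref{p - 2.1}(3), and you correctly contrast this with the density/choice construction of Lemma~\ref{lemma1}, which your argument avoids entirely.
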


\begin{proof}
    A prenex formula is 1-satisfiable if it is classical. A formula is classical satisfiable if its Skolemization is classical satisfiable.
\end{proof}

\section{Validity in the BS class}
\begin{definition}[BS class w.r.t validity] 
A first-order formula $\varphi$ is said to be in the
\emph{Bernays--Sch\"onfinkel class}
if it is of the form
$\forall\bar{x}\, \exists\bar{y}\; \psi(\bar{x},\bar{y})$,
where:
\begin{enumerate}
    \item $\psi$ is a quantifier-free formula,
    \item $\bar{x}$ is a (possibly empty) sequence of universally  quantified variables,
    \item $\bar{y}$ is a (possibly empty) sequence of existentially quantified variables,
    \item the signature contains \emph{no function symbols of positive arity}
    (i.e. only constant symbols and predicate symbols).
\end{enumerate}
We refer to $\forall^\ast \exists^\ast $-prefix formulas over a function-free vocabulary as the $BS$ class.
\end{definition}
In case of validity Skolemization in the $BS$ class all Skolem functions replacing the universal quantifiers are constants.
\begin{theorem}\label{theorem1}
Validity in the $BS$ class is decidable for all G\"odel logics.
\end{theorem}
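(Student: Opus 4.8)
The plan is to reduce validity in the $BS$ class, via Lemma~\ref{lemma1}, to the validity of a purely existential function-free formula, and then to decide the latter through a finite Herbrand-style disjunction whose validity is a question of propositional Gödel logic. Let $\varphi = \forall\bar{x}\,\exists\bar{y}\,\psi(\bar{x},\bar{y})$ be a $BS$ formula. For validity the quantifiers $\forall\bar{x}$ are the outermost strong quantifiers, and no weak (existential) quantifier has them in its scope; hence, as already observed, their Skolem terms carry no arguments and are fresh constants $\bar{c}$. Thus $\varphi^{S} = \exists\bar{y}\,\psi(\bar{c},\bar{y})$, and by Lemma~\ref{lemma1} with $\Gamma=\emptyset$ we obtain $\Vdash_{G}\varphi$ iff $\Vdash_{G}\exists\bar{y}\,\psi(\bar{c},\bar{y})$ for every Gödel logic $G$. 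Crucially, the Skolemized formula is still over a function-free signature — only constants have been added — so its Herbrand universe $H$ is the finite set of constant symbols occurring in it (adjoining one dummy constant if none is present, to keep $H$ nonempty).

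Second I would establish the Herbrand reduction: $\exists\bar{y}\,\psi(\bar{c},\bar{y})$ is valid in $G_V$ iff the ground formula $\Psi := \bigvee_{\bar{t}\in H^{|\bar{y}|}}\psi(\bar{c},\bar{t})$ is valid in $G_V$. The direction $(\Leftarrow)$ is easy: in any interpretation $\mathcal{I}$ the constant denotations witness the disjuncts, so $\mathcal{I}(\Psi)=1$ (a finite maximum) forces some instance $\psi(\bar{c},\bar{d})$ with domain arguments $\bar{d}$ to equal $1$, whence the supremum defining $\mathcal{I}(\exists\bar{y}\,\psi)$ is $1$. For $(\Rightarrow)$, suppose $\Psi$ fails, say $\mathcal{J}(\Psi)<1$, and restrict $\mathcal{J}$ to the sub-domain $U' = \{\, c^{\mathcal{J}} : c\in H \,\}$ of constant denotations. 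Since $\psi$ is quantifier-free and function-free, the value of each instance over $U'$ coincides with the value of a ground instance, so the existential supremum over the finite $U'$ collapses to the finite maximum $\mathcal{J}(\Psi)<1$; this refutes validity of $\exists\bar{y}\,\psi$.

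Finally, validity of the quantifier-free ground formula $\Psi$ in $G_V$ is exactly propositional Gödel validity, treating the distinct ground atoms of $\Psi$ as independent propositional variables (any assignment is realizable, since predicates may be interpreted freely). Propositional Gödel validity is decidable for every truth-value set — it is Dummett's $\mathbf{LC}$ when $V$ is infinite and the $m$-valued logic $G_m$ when $V$ is finite — so validity in the $BS$ class is decidable for all Gödel logics.

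The main obstacle is the $(\Rightarrow)$ direction of the Herbrand reduction: a priori the existential supremum could equal $1$ without being attained, so one might fear that no finite set of instances captures validity, and indeed this is exactly the phenomenon that makes general Skolemization fail in Gödel logics. Function-freeness is precisely what defuses it here: the Herbrand universe is finite, and restricting any countermodel to the constant denotations turns the supremum into a maximum over finitely many ground instances, reducing the whole question to the decidable propositional case.
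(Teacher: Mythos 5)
Your proposal is correct and follows essentially the same route as the paper: Lemma~\ref{lemma1} turns the outermost universals into Skolem constants, a countermodel is restricted to the (finite, function-free) set of constant denotations so that the existential supremum collapses to a finite Herbrand disjunction, and the resulting ground formula is decided propositionally. You merely spell out in full detail the restriction step and the propositional reduction that the paper's terse proof and surrounding discussion leave implicit.
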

\begin{proof}
    From Lemma \ref{lemma1}, it follows that
    \begin{align*}
       \Gamma \Vdash_G \forall \bar{x} \exists \bar{y}A(\bar{x},\bar{y})
\Longleftrightarrow \Gamma \Vdash_G  \exists \bar{y}A(\bar{c},\bar{y})
    \end{align*}
for new constants $\bar{c}$.
Suppose there is a countermodel $M$ such that
$M \nVdash_G  \exists \bar{y}A(\bar{c},\bar{y})$.
Then there is also a countermodel $M'$ such that $M' \nVdash_G  \exists \bar{y}A(\bar{c},\bar{y})$ where the domain of $M'$ contains only interpretations of $\bar{c}$.
\end{proof}
In most Gödel logics, especially those that are not r.e. or lack an analytic calculus, no Herbrand disjunction or syntactic decision procedure exists. However, for the $BS$ class, Skolemization leads directly to a finite disjunction of quantifier-free ground instances:
\[
A(\bar{c_1}, \bar{d_1}) \vee A(\bar{c_2}, \bar{d_2}) \vee \dots \vee A(\bar{c_n}, \bar{d_n})
\]
This finite Herbrand disjunction can be evaluated propositionally, ensuring decidability. All Gödel fragments whose quantifier alternation can be reduced to finite Herbrand disjunctions are decidable.
\begin{corollary}\label{cor}
\begin{enumerate}
\item[1)]  Let $\exists \bar{y}A(\bar{y})$
contain only constants $\bar{c}$, then Herbrand's theorem holds for $\exists \bar{y}A(\bar{y})$ for all G\"odel logics $G$.
\item[2)] Let $\forall \bar{x} \exists \bar{y}A(\bar{x},\bar{y})$ prenex formulas contain only constants $\bar{d}$, then  $\Gamma \Vdash_G \forall \bar{x} \exists \bar{y}A(\bar{x},\bar{y})
\Longleftrightarrow \Gamma \Vdash_{G'} \forall \bar{x} \exists \bar{y}A(\bar{x},\bar{y}) $
for all infinitely-valued G\"odel logics $G$, $G'$.
    \end{enumerate}
\end{corollary}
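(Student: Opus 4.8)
The plan is to handle the two parts in sequence, using part (1) as a lemma for part (2), and in both cases to convert the quantifier structure into a finite quantifier-free problem that can be analysed propositionally.

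\textbf{Part (1).} For $\exists\bar y\,A(\bar y)$ whose only non-logical terms are the constants $\bar c$, the Herbrand universe is the finite set $\{c_1,\dots,c_k\}$, so the candidate Herbrand disjunction $H \equiv \bigvee_{\bar c} A(\bar c)$ is a finite quantifier-free formula. I would establish $\Vdash_G \exists\bar y\,A(\bar y) \iff \Vdash_G H$ by two inclusions. The direction $(\Leftarrow)$ is immediate from the semantics: $\I(\exists\bar y\,A(\bar y)) = \sup_{\bar u}\I(A(\bar u)) \ge \max_{\bar c}\I(A(\bar c)) = \I(H)$, so if $H$ evaluates to $1$ under every interpretation then so does the existential formula. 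For $(\Rightarrow)$ I argue contrapositively: given $\I$ into $V$ with $\I(H)=d<1$, I pass to the interpretation $\I^*$ obtained by restricting the domain to the finite nonempty set $U^*$ of denotations of $\bar c$, keeping the predicate assignments on $U^*$ and interpreting each constant by itself. Since $A$ is quantifier-free, the value of each ground instance depends only on the truth values of its ground atoms, which are unchanged, so $\I^*(A(\bar c))=\I(A(\bar c))$; and since every element of $U^*$ is a constant denotation, the supremum defining the existential quantifier collapses to a maximum, giving $\I^*(\exists\bar y\,A(\bar y)) = \max_{\bar c}\I^*(A(\bar c)) = d < 1$. Hence $\nVdash_G \exists\bar y\,A(\bar y)$.

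\textbf{Part (2).} For $\forall\bar x\,\exists\bar y\,A(\bar x,\bar y)$ with only constants $\bar d$, I would first invoke Lemma \ref{lemma1}: because the universal block is outermost, its validity-Skolemization replaces $\forall\bar x$ by fresh constants $\bar c$, yielding $\Gamma\Vdash_G \forall\bar x\,\exists\bar y\,A(\bar x,\bar y) \iff \Gamma\Vdash_G \exists\bar y\,A(\bar c,\bar y)$, where the right-hand formula now contains only constants. Applying part (1), validity of $\exists\bar y\,A(\bar c,\bar y)$ is equivalent to validity of the finite quantifier-free Herbrand disjunction $H$ built from ground atoms over the constants. The final step uses the fact recorded earlier in the paper, that every infinite truth value set induces the same propositional Gödel logic: treating the ground atoms of $H$ as propositional variables, validity of $H$ is a purely propositional question, so it holds in $G$ iff it holds in $G'$ for any two infinitely-valued Gödel logics. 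Chaining the three equivalences delivers the claim.

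\textbf{Main obstacle.} The delicate point is the role of the side premises $\Gamma$: the propositional-invariance step is only available after the \emph{entire} entailment has been reduced to a quantifier-free ground statement, so I would need $\Gamma$ to reduce to that setting as well (for instance $\Gamma$ empty, or quantifier-free over the constants), and I would state this hypothesis explicitly, since with a genuinely quantified $\Gamma$ the full first-order logics $G$ and $G'$ need not agree. The secondary point requiring care lies in the domain-restriction of part (1): one must verify that passing to the finite Herbrand universe neither introduces new ground atoms nor alters the values of existing ones, which is exactly what guarantees that the supremum collapses to the maximum over constant denotations and that the witnessing value $d<1$ is faithfully transported to the restricted model.
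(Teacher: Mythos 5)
Your proposal is correct and follows essentially the same route as the paper: part (1) by collapsing the supremum over the finite, function-free Herbrand universe via restriction of the domain to the constant denotations, and part (2) by chaining Lemma \ref{lemma1}, part (1), and the coincidence of all infinitely-valued propositional G\"odel logics on the resulting ground disjunction. Your caveat about the side premises $\Gamma$ is well placed---the paper's own proof silently treats $\Gamma$ as inert (effectively empty), and the domain restriction in (1) would not in general preserve satisfaction of a genuinely quantified $\Gamma$---so stating that hypothesis explicitly only sharpens the argument.
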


\begin{proof}
1) According to the theorem \ref{theorem1},
$M \nVdash_G  \exists \bar{y}A(\bar{c},\bar{y})$ implies $M' \nVdash_G  \exists \bar{y}A(\bar{c},\bar{y})$ with restricted domain to constants only.\\
    2) follows from 1), as the Herbrand disjunction is contained in $\bigvee_nA(\bar{c}_n, \bar{d}_n)$ where $\bar{c}_n, \bar{d}_n$ are possible variations of $\bar{c}, \bar{d}$ and validity for propositional formulas coincides with infinitely-valued G\"odel logics.
\end{proof}
\begin{remark}
  Note that  1) is not trivial as prenex formulas and consequently $\exists$-formulas (see. Lemma \ref{lemma1}) for countable G\"odel logics are not r.e. \cite{Baaz-Preining-Zach:GoedelFragments}.
\end{remark}


\begin{corollary}
All Gödel logics coincide for
the $BS$ class w.r.t. 1-satisfiability, but only the infinite-valued Gödel logics coincide for the $BS$ class w.r.t. validity. 
The finite-valued Gödel logics have less and less valid formulas in the $BS$ class relative to their increasing truth values. 
\end{corollary}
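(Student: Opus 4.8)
The plan is to establish the three assertions of the corollary in turn, each as a short consequence of results already proved. For the 1-satisfiability claim I would argue as follows. The BS class with respect to 1-satisfiability is the class of $\exists^\ast\forall^\ast$-prefix, function-free formulas, which is in particular a prenex class. By Theorem \ref{p - 2.1}(3), 1-satisfiability in the prenex fragment coincides with classical satisfiability for every truth value set. Since classical satisfiability does not depend on the underlying Gödel set, the 1-satisfiable BS sentences are the same in every Gödel logic, which is the first assertion.

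For the coincidence of validity across infinite-valued logics, I would run the Skolemization reduction. By Lemma \ref{lemma1} a BS formula $\forall\bar{x}\,\exists\bar{y}\,A(\bar{x},\bar{y})$ is $G$-valid iff $\exists\bar{y}\,A(\bar{c},\bar{y})$ is $G$-valid, where the $\bar{c}$ are fresh constants; the Skolem terms are constants precisely because the universally quantified $\bar{x}$ form the outermost block and hence have no weak existentials to their left. The reduced formula is existential and function-free over the constants $\bar{c}$ (together with any constants occurring in $A$), so by Corollary \ref{cor}(1) a finite Herbrand disjunction
\[
A(\bar{c}_1,\bar{d}_1)\vee A(\bar{c}_2,\bar{d}_2)\vee\dots\vee A(\bar{c}_n,\bar{d}_n)
\]
decides its validity. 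This disjunction is quantifier-free, and validity of quantifier-free formulas agrees across all infinite-valued Gödel logics; Corollary \ref{cor}(2) then yields that the infinite-valued Gödel logics coincide on the BS class with respect to validity.

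It remains to show that no finite-valued logic joins this coincidence, which simultaneously supplies the word ``only'' and the monotone-decrease statement. For this I would invoke the strict containment $G_m\supsetneq G_{m+1}$ of Proposition \ref{p - 1.2}(1), witnessed by
\[
(\top\supset A_1)\vee(A_1\supset A_2)\vee\dots\vee(A_{m-1}\supset\bot),
\]
which is valid in $G_m$ but not in $G_{m+1}$. This witness is propositional — built from $0$-ary predicate symbols with empty quantifier blocks $\bar{x}$ and $\bar{y}$ — and therefore lies in the BS class as its degenerate case. Hence the set of BS-valid formulas strictly decreases along $G_m\supsetneq G_{m+1}$, giving the claim that more truth values yield fewer valid BS formulas; and by Proposition \ref{p - 1.2}(2)--(3) each $G_m$ properly contains $G_\uparrow$, $G_\downarrow$, and $G_{[0,1]}$, so no finite-valued logic agrees with the infinite-valued ones on all BS formulas.

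The first two paragraphs are essentially bookkeeping over already-established reductions and are routine. The one point deserving care — and the only place where the finite/infinite dichotomy is genuinely forced — is the last step: one must confirm that the separating witnesses of Proposition \ref{p - 1.2} actually belong to the syntactically restricted BS class and are not lost in passing to it. This is immediate here, since the (Fin) formula is propositional and the empty-prefix case is admitted by the BS definition, but it is the crux that separates the finite-valued logics from the infinite-valued ones.
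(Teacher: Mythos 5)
Your proof is correct and takes essentially the same route as the paper: the paper's (much terser) proof likewise reduces BS validity to the common finite Herbrand disjunction, attributes the coincidence of the infinite-valued logics to the coincidence of their propositional fragments, and locates the finite-valued differences in the propositional part, i.e.\ the (Fin) witness of Proposition~\ref{p - 1.2}. Your write-up only makes explicit what the paper leaves implicit --- the Skolemization step via Lemma~\ref{lemma1}, Theorem~\ref{p - 2.1}(3) for the 1-satisfiability half, and the observation that the propositional (Fin) formula is a degenerate (empty-prefix, function-free) BS formula --- so it is a faithful expansion rather than a different argument.
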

\begin{proof}
The coincidence in the infinite case is given by coincidence in the propositional formulas and relates to the same Herbrands disjunction as shown in Proposition \ref{p - 1.2}. The difference is induced by the difference on the propositional part. 
\end{proof}
Negative classical results (undecidability) transfer to Gödel logics because the validity of $BS$ formulas coincides across classical and Gödel semantics.
\begin{corollary}
The BS fragment of any infinitely-valued G\"odel logic is the intersection of the $BS$ fragments of the finite-valued G\"odel logics, both for 1-satisfiability and validity
\[
\text{BS}({G_{[0,1]}}) = \bigcap \text{BS}({G_n}).
\]
\end{corollary}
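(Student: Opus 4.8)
The plan is to handle the two notions separately, reducing each to a statement about \emph{propositional} G\"odel logics, where the relationship between the infinite- and finite-valued cases is already understood. In both cases the inclusion $\text{BS}(G_{[0,1]}) \subseteq \bigcap_n \text{BS}(G_n)$ is immediate: by Proposition \ref{p - 1.2} the logic $G_{[0,1]}$ is contained in every G\"odel logic, hence in each $G_n$, so anything valid (resp.\ $1$-satisfiable) in $G_{[0,1]}$ belongs to every $\text{BS}(G_n)$. The content lies in the reverse inclusion. For $1$-satisfiability it is trivial: by Theorem \ref{theorem1} ($1$-satisfiability) the $1$-satisfiable $BS$ formulas coincide with the classically satisfiable ones, and classical satisfiability is independent of the truth value set, so the set $\text{BS}(G)$ (w.r.t.\ $1$-satisfiability) is literally the same for every $G$, finite or infinite, whence the intersection equals each member.

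For validity I would combine the reductions assembled in the previous sections. By Validity Skolemization (Lemma \ref{lemma1}) together with Theorem \ref{theorem1} (validity), validity of $\forall \bar{x}\,\exists \bar{y}\,A(\bar{x},\bar{y})$ in $G_V$ is equivalent to validity of the purely existential formula $\exists \bar{y}\,A(\bar{c},\bar{y})$ over fresh constants $\bar{c}$, and Corollary \ref{cor}(1) (Herbrand's theorem for constant-only existential formulas) further reduces this, uniformly in $V$, to validity of a finite Herbrand disjunction $\bigvee_i A(\bar{c}_i,\bar{d}_i)$ of quantifier-free ground instances --- that is, to the validity of a single propositional G\"odel formula. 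One can then close the argument in one of two equivalent ways: directly, via Corollary \ref{cor}(2) (all infinite-valued G\"odel logics agree on $BS$ validity, so $\text{BS}(G_{[0,1]}) = \text{BS}(G_\uparrow)$) together with Proposition \ref{p - 2.2} ($G_\uparrow = \bigcap_m G_m$, which restricts to the $BS$ class to give $\text{BS}(G_\uparrow) = \bigcap_m \text{BS}(G_m)$); or, more explicitly, by the finite-collapse argument below.

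The main obstacle, and the heart of the matter, is the contrapositive finite-collapse step: if a $BS$ formula fails in $G_{[0,1]}$, it must fail in some finite $G_n$. Failure in $G_{[0,1]}$ yields, through the reductions above, a countervaluation of the finite Herbrand disjunction. Since that disjunction contains only finitely many atoms, the countervaluation assigns only finitely many distinct truth values; by Proposition \ref{p - 1.1} the evaluation of a formula depends only on the relative order of its atomic values, so embedding this finite chain order-isomorphically into a finite G\"odel set $V_n$ produces a genuine countermodel in $G_n$. The delicate point is exactly that this embedding respects the evaluation clauses (1)--(6) of Definition \ref{goedel}, which holds precisely because only finitely many values occur; it is here that the reduction to a \emph{finite} Herbrand disjunction --- rather than to an arbitrary existential formula, for which no such finite bound on occurring truth values is available --- is indispensable. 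Combining the finite collapse with the easy inclusion gives $\text{BS}(G_{[0,1]}) = \bigcap_n \text{BS}(G_n)$ for validity as well.
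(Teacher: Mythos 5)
Your proposal is correct and matches the paper's own argument in essentials: the paper's (terse) proof is precisely your finite-collapse step --- the status of a $BS$ formula reduces via Skolemization to a finite Herbrand disjunction over finitely many ground atoms, so any $[0,1]$-countervaluation realizes only finitely many truth values and, since G\"odel evaluation depends only on their relative order, embeds order-isomorphically into any finite G\"odel set with enough values, the easy inclusion being monotonicity ($G_{[0,1]}\subseteq G_n$); your alternative closure via $\text{BS}(G_{[0,1]})=\text{BS}(G_\uparrow)$ and $G_\uparrow=\bigcap_m G_m$ is a legitimate shortcut the paper does not take. One small repair: justifying the easy inclusion for $1$-satisfiability by $G_{[0,1]}\subseteq G_n$ is misdirected, since containment of \emph{validities} says nothing about $1$-satisfiability (trivially, every $V_n$-interpretation is a $[0,1]$-interpretation, which gives the \emph{reverse} inclusion), but this slip is harmless because your subsequent observation --- that $1$-satisfiability of prenex formulas coincides with classical satisfiability for every truth value set, so the $BS$ fragments w.r.t.\ $1$-satisfiability are literally identical across all G\"odel logics --- settles both inclusions at once, and is in fact more explicit than the paper, whose proof addresses only the validity half of the statement.
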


\begin{proof}

The truth of $BS$ formulas depends only on the Herbrand disjunction.
Each Herbrand disjunction involves finitely many atoms; Once the Gödel set has more truth values than these atoms, further refinements do not affect validity.
Therefore, if a $BS$ formula holds in the infinite Gödel logic, it holds in all sufficiently large finite Gödel logics, and vice versa.
\end{proof}

\begin{remark}
The $BS$ class of any infinite Gödel logic is a r.e. sub-theory of the full (sometimes non–r.e.) first-order Gödel logic. As an example, consider \[
\text{BS}({G_{[0,1]}}) \subseteq \text{BS}({G_n})\subseteq {G_n}
\] and similarly for all finite Gödel logics. Hence, decidability of the $BS$ class follows constructively from the Skolemization lemma \ref{lemma1}. 
\end{remark}

\section{Conclusion and Further Extensions}
As 1-satisfiability is not dual to validity as in classic logic, other notions of satisfiability can be formulated , e.g., $>0$-satisfiability:

\begin{definition}
A formula in Gödel logic is \emph{$>0$-satisfiable
} if there exists at least one interpretation
that assigns $>0$
to the formula. $>0$-satisfiability of a formula $A$ coincides with 1-satisfiability of $\neg \neg A$. 
\end{definition}
As the gluing lemma does not hold for prenex formuals in $>0$- satisfiability, the status of the $BS$ class remains open. A natural direction for extension is the study of \emph{other} subclasses of prenex form as the Ackermann class, Gödel class, monadic fragments, or guarded fragments. Here the results might depend on the existencee of the Herbrand disjunction for prenex formulas which hold for finite-valued Gödel logic, standard $G_{[0,1]}$ but not for $G_\uparrow$. Countable Gödel logics do not have Herbrand disjunction \cite{Baaz-Zach:Compact}.




\begin{thebibliography}{9}



\bibitem{Baaz:Delta}
Matthias Baaz.
\newblock Infinite-valued {G}{\"{o}}del logic with 0-1-projections and
  relativisations.
\newblock In Petr H{\'{a}}jek, editor, {\em {G}{\"{o}}del'96: {L}ogical
  Foundations of Mathematics, Computer Science, and Physics}, volume~6 of {\em
  Lecture Notes in Logic}, pages 23--33. Springer-Verlag, Brno, 1996.

\bibitem{Baaz-Gamsakhurdia:Prenex} Matthias Baaz, Mariami Gamsakhurdia.
\newblock Goedel logics: Prenex fragments,
\newblock CoRR abs/2407.16683 (2024)

\bibitem{QFS}Matthias Baaz, Mariami Gamsakhurdia, Rosalie Iemhoff, Raheleh Jalali.
 \newblock Skolemization In Intermediate Logics. 
\newblock ArXiv, abs/2501.15507.(2025).



\bibitem{Baaz-Ciabattoni:TaitCutElimGoedel}
Matthias Baaz, Agata Ciabattoni.
\newblock A {S}ch{\"{u}}tte-{T}ait style cut-elimination proof for first-order
  {G}\"{o}del logic.
\newblock In Uwe Egly and Christian~G. Ferm{\"{u}}ller, editors, {\em Automated Reasoning with Analytic Tableaux and Related Methods, International
  Conference, TABLEAUX 2002}, volume 2381 of {\em LNCS}, pages 24--38, Berlin,
  2002. Springer.





\bibitem{Baaz-Preining:Gödel–Dummett} Matthias Baaz, Norbert Preining. \newblock Gödel–Dummett logics, \newblock in: Petr Cintula, Petr Hájek, Carles Noguera (Eds.) {\em Handbook of Mathematical Fuzzy Logic} vol.2, College Publications, 2011, pp.585–626, chapterVII.




\bibitem{Baaz-Preining-Zach:GoedelFragments}
Matthias Baaz, Norbert Preining, and Richard Zach.
\newblock Characterization of the axiomatizable prenex fragments of first-order
  {G}\"{o}del logics.
\newblock In {\em 33rd IEEE International Symposium on Multiple-Valued Logic
  (ISMVL 2003)}, pages 175--180, Los Alamitos, 2003. IEEE Computer Society.

\bibitem{Baaz-Preining-Zach:CompletenessGoedelDelta}
Matthias Baaz, Norbert Preining, and Richard Zach.
\newblock Completeness of a hypersequent calculus for some first-order
  {G}\"{o}de logics with delta.
\newblock In {\em 36th International Symposium on Multiple-valued Logic (ISMVL
  2006)}. IEEE Computer Society, 2006.











\bibitem{Baaz-Leitsch-Zach:FirstOrderGoedel} Matthias Baaz, Alexander Leitsch, and Richard Zach.
\newblock Incompleteness of a first-order {G}{\"{o}}del logic and some temporal   logics of programs.
\newblock In {\em Computer Science Logic}, volume 1092/1996 of {\em Lecture   Notes in Computer Science}, pages 1--15, 1996.




\bibitem{Baaz-Leitsch-Zach:TimeGaps}
Matthias Baaz, Alexander Leitsch, and Richard Zach.
\newblock Completeness of a first-order temporal logic with time-gaps.
\newblock {\em Theoretical Computer Science}, 160(1--2):241--270, 1996.










\bibitem{Baaz-Zach:Compact}
Matthias Baaz and Richard Zach.
\newblock Compact propositional {G}{\"{o}}del logics.
\newblock In {\em Proceedings of 28th International Symposium on
  Multiple-Valued Logic}, pages 108--113, Los Alamitos, CA, 1998. IEEE Computer
  Society Press.


\bibitem{Beckmann-Goldstern-Preining:Fraisse}
Arnold Beckmann, Martin Goldstern, and Norbert Preining.
\newblock Continuous {F}ra{\"{\i}}ss{\'{e}} conjecture.
\newblock {\em Order}, 25(4):281--298, 2008.




\bibitem{DrebenGoldfarb}
B. Dreben, W. D. Goldfarb, 
\newblock The Decision Problem, 
\newblock Addison-Wesley, 1979.






\bibitem{Dummett:GodelLogic}
Michael Dummett.
\newblock A propositional calculus with denumerable matrix.
\newblock {\em Journal of Symbolic Logic}, 24(2): 97--106, 1959.


\bibitem{Godel:ZumAussagen}
Kurt G{\"{o}}del.
\newblock Zum intuitionistischen {A}ussagenkalk{\"{u}}l.
\newblock {\em Anzeiger Akademie der Wissenschaften Wien}, 69: 65--66, 1932.

\bibitem{Hajek:1998}
Petr H{\'{a}}jek.
\newblock {\em Metamathematics of Fuzzy Logic}, volume~4 of {\em Trends in
  Logic}.
\newblock Kluwer, Dordrecht, 1998.

\bibitem{trakh}
Petr H{\'{a}}jek.
\newblock {\em Trakhtenbrot Theorem and Fuzzy Logic}.
\newblock CSL 1999, Springer Berlin Heidelberg, 1--8, 1999. 







\bibitem{Kechris:DescriptiveSetTheory}
Alexander~S. Kechris.
\newblock {\em Classical Descriptive Set Theory}, volume 159 of {\em Graduate
  Texts in Mathematics}.
\newblock Springer, 1995.






\end{thebibliography}
\end{document}